\newtheorem{theorem}{Theorem}
\newtheorem{lemma}[theorem]{Lemma}
\def\cal{\mathcal}
\def\pp{\mathcal{P}}
\def\lpp{\mathcal{L}}
\def\smallsetminus{\setminus}
\newcommand{\CA}[1]{{\mathcal C}_{#1}}
\newcommand{\MS}[1]{{\mathcal M}_{#1}}
\newcommand{\DS}[1]{{\mathcal D}_{#1}}
\newcommand{\AT}{Axiomatization Theorem}
\newcommand{\GRT}{Geometric Representation Theorem}
\newcommand{\PL}{Pumping Lemma}
\def\lgx{\ell(\gamma,x)} 
\def\MOB{\cal M}
\def\OPC{\overline{\cal M}}
\def\tba{{\tiny TBA}}
\def\tba{{\tiny NC}}
\def\nco{{\tiny NC}}
\def\gp{\gamma'}
\def\gpp{\gamma''}
\def\gpstar{\gp_*}
\def\gppstar{\gpp_*}
\def\gcurve{$\gamma$-curve} 
\def\tauG{\Gamma_\tau} 
\def\tauGp{\Gamma'_\tau} 
\def\tauGpp{\Gamma''_\tau} 
\newcommand{\NSOMPA}[1]{d^S_{#1}}
\newcommand{\NOMPA}[1]{d_{#1}}
\newcommand{\NSPPA}[1]{t^S_{#1}}
\newcommand{\NSPA}[1]{p^S_{#1}}
\newcommand{\NSMA}[1]{q^S_{#1}}
\newcommand{\NSAMA}[1]{r^S_{#1}}
\newcommand{\NSPCh}[1]{\rho^S_{#1}}
\newcommand{\NSMCh}[1]{\tau^S_{#1}}
\newcommand{\NPPA}[1]{t_{#1}}
\newcommand{\NPA}[1]{p_{#1}}
\newcommand{\NMA}[1]{q_{#1}}
\newcommand{\NAMA}[1]{r_{#1}}
\newcommand{\NPCh}[1]{\rho_{#1}}
\newcommand{\NMCh}[1]{\tau_{#1}}
\newcommand{\GSP}[2]{g^S(#1,#2)}
\newcommand{\GP}[2]{g(#1,#2)}
\newcommand{\GSM}[2]{h^S(#1,#2)}
\newcommand{\GM}[2]{h(#1,#2)}
\renewcommand{\GSP}[2]{g^S_{#2}(#1)}
\renewcommand{\GP}[2]{g_{#2}(#1)}
\renewcommand{\GSM}[2]{h^S_{#2}(#1)}
\renewcommand{\GM}[2]{h_#2(#1)}
\title[\today]{On the number of simple arrangements of five double pseudolines} 
\author{Julien Fert\'e}
\author{Vincent Pilaud}
\author{Michel Pocchiola}
\address{Julien Fert{\'e}\\
Laboratoire d'Informatique Fondamentale\\
Universit{\'e} de Provence\\
Marseille\\
France}
\email{julien.ferte@lif.univ-mrs.fr}
\address{Vincent Pilaud\\
Universit{\'e} Pierre et Marie Curie\\
{\'E}quipe Combinatoire et Optimisation\\
Paris\\
France}
\email{vpilaud@math.jussieu.fr}
\address{Michel Pocchiola\\
Universit{\'e} Pierre et Marie Curie\\
{\'E}quipe Combinatoire et Optimisation\\
Paris\\
France}
\email{pocchiola@math.jussieu.fr}
\begin{document}
\maketitle

\begin{abstract} We describe an incremental algorithm to enumerate the isomorphism classes of double pseudoline arrangements.
The correction of our algorithm is based on the connectedness under mutations of the spaces of one-extensions of double pseudoline arrangements, 
proved in this paper. Counting results derived from an implementation of our algorithm are also reported.
\end{abstract}


\section{Introduction}
An \emph{arrangement of double pseudolines} 
is a finite set of separating simple closed curves embedded in a real two-dimensional projective plane 
such that any two  curves have exactly four intersection points, cross transversally at these points, and 
induce a cell decomposition of their underlying projective plane.  
Two arrangements of double pseudolines are said to be  \emph{isomorphic} 
if one is the image of the other by a homeomorphism of their underlying projective planes.
There is a unique isomorphism class of arrangements of two double pseudolines and 
Figure~\ref{fulllist} depicts representatives of the thirteen isomorphism 
classes of simple arrangements of three double pseudolines where, as usual, a simple arrangement is an arrangement 
where no three curves meet at the same point.

\begin{figure}[!htb]
\footnotesize
\def\factor{0.15315015000023}
\centering
\psfrag{8}{} \psfrag{7}{} \psfrag{6}{} \psfrag{5}{} \psfrag{4}{} \psfrag{3}{} \psfrag{2}{}
\psfrag{A}{$04$}
\psfrag{B}{$07$}
\psfrag{C}{$18$}
\psfrag{CCone}{$18_1$}
\psfrag{D}{$25$} 
\psfrag{F}{$07$} 
\psfrag{G}{$37$}
\psfrag{H}{$15$}
\psfrag{HCone}{$15_1$}
\psfrag{J}{$43$}
\psfrag{JCone}{$43_1$}
\psfrag{K}{$25$}
\psfrag{KCone}{$25_1$}
\psfrag{L}{$33$}
\psfrag{LCone}{$33_1$}
\psfrag{M}{$32$}
\psfrag{N}{$25$}
\psfrag{Nstar}{$25^*$}
\psfrag{NstarCone}{$25^*_1$}
\psfrag{NstarCtwo}{$25^*_2$}
\psfrag{O}{$32$}
\psfrag{OCone}{$32_1$}
\psfrag{OCtwo}{$32_2$}
\psfrag{P}{$22$}
\psfrag{PCone}{$22_1$}
\psfrag{Q}{$25$}
\psfrag{R}{$36$}
\psfrag{Z}{$64$}
\psfrag{o24}{24} \psfrag{o12}{12} \psfrag{o2}{2} \psfrag{o4}{4} \psfrag{o6}{6} \psfrag{o1}{1}
\includegraphics[width = \factor\linewidth]{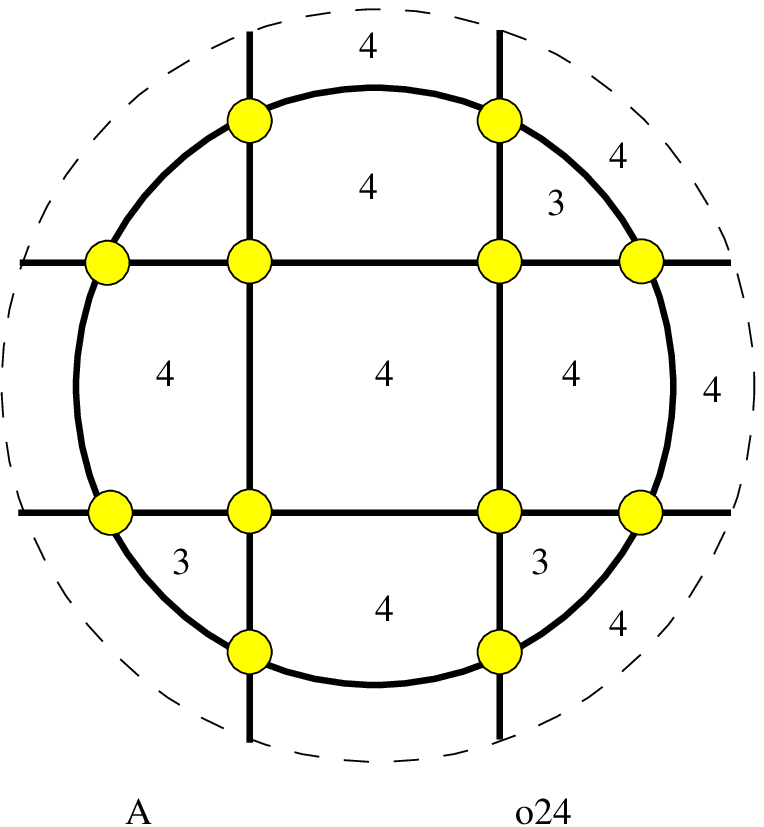}
\includegraphics[width = \factor\linewidth]{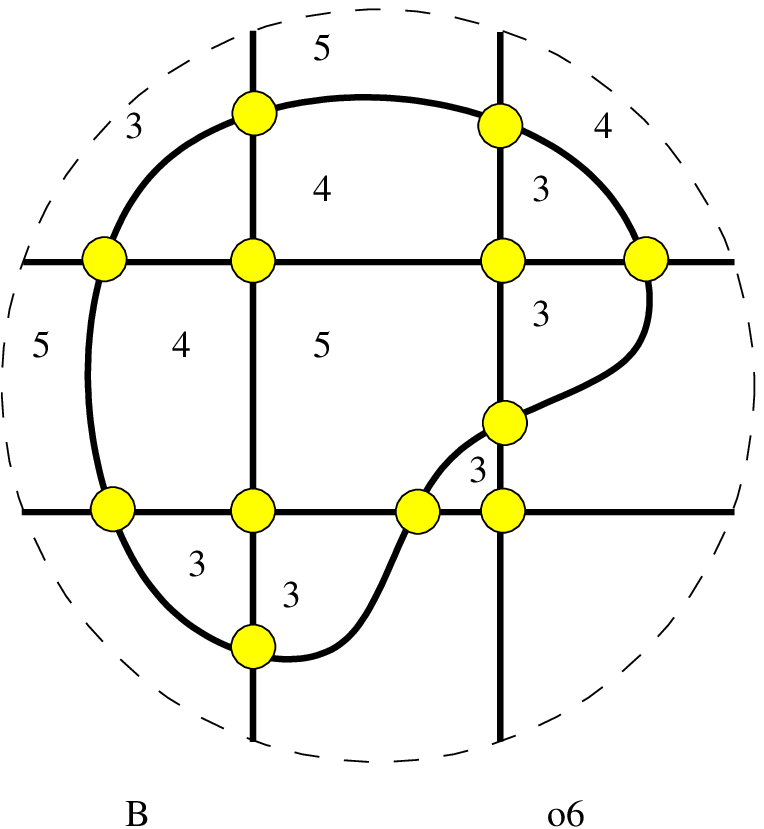}
\includegraphics[width = \factor\linewidth]{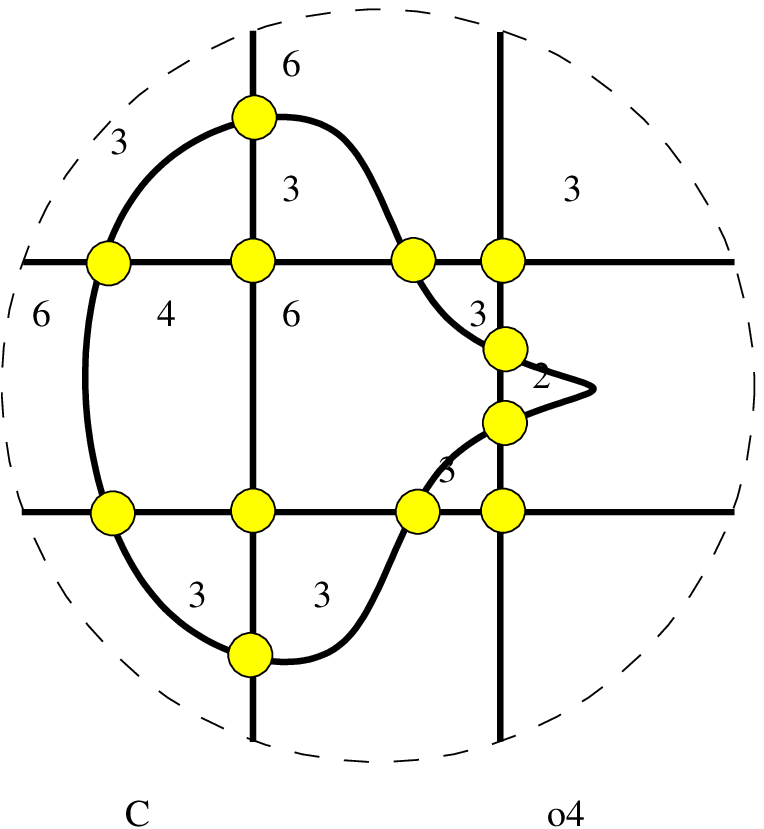}
\includegraphics[width = \factor\linewidth]{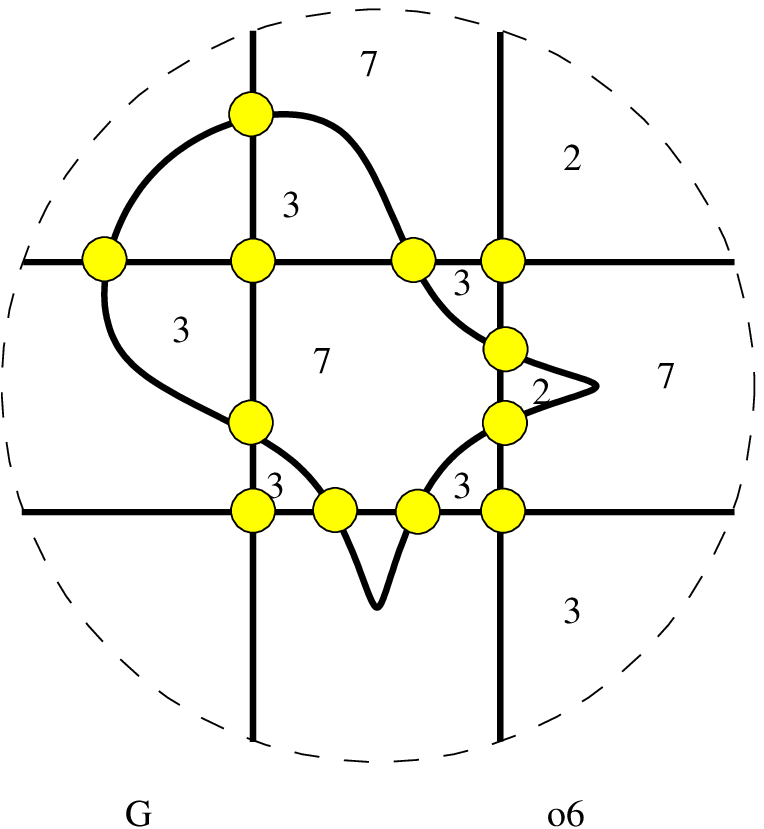}
\includegraphics[width = \factor\linewidth]{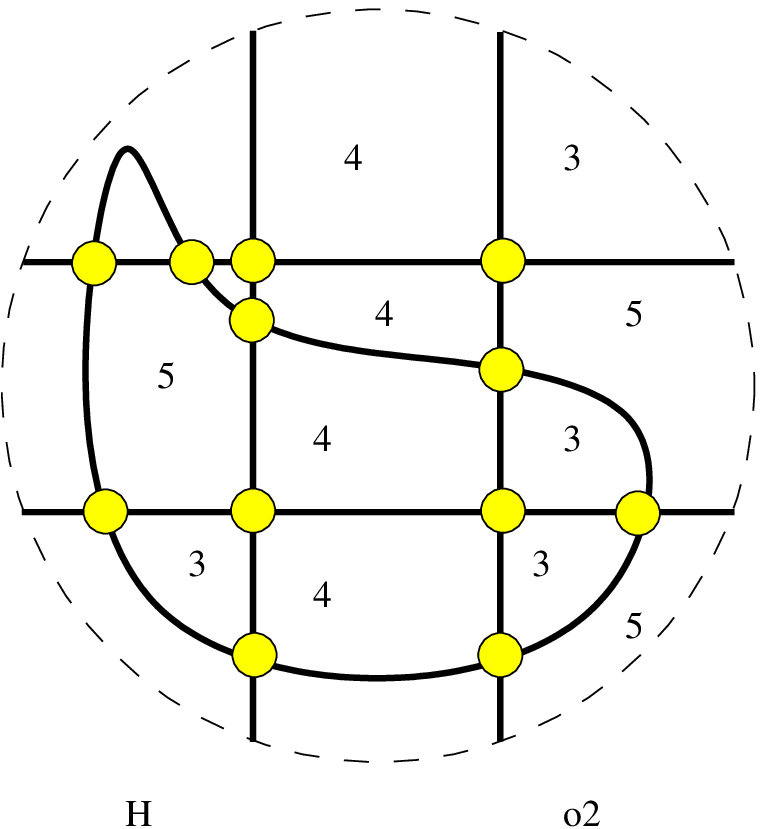}
\includegraphics[width = \factor\linewidth]{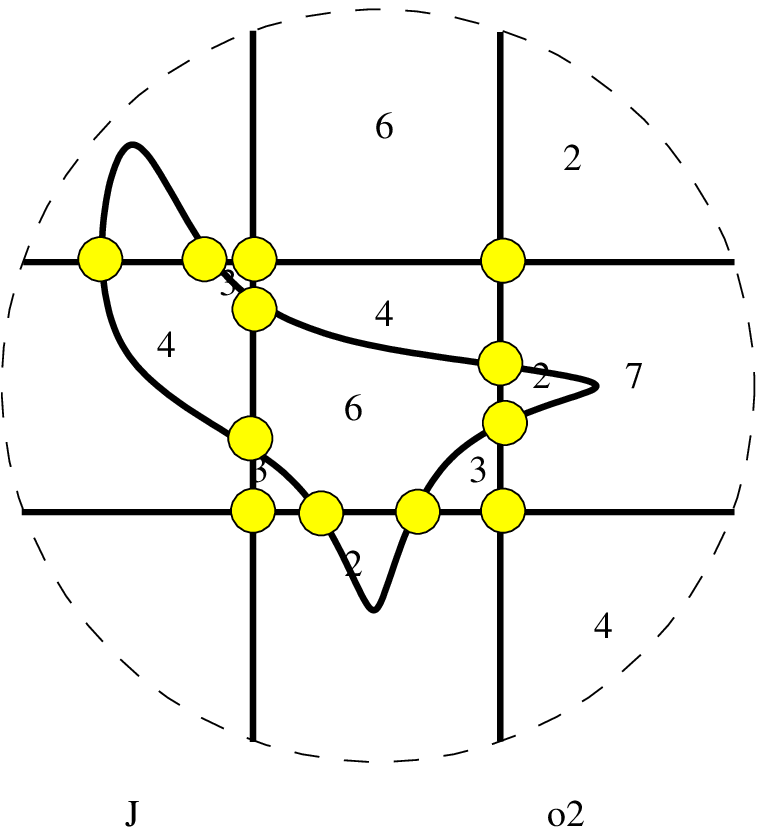}
\includegraphics[width = \factor\linewidth]{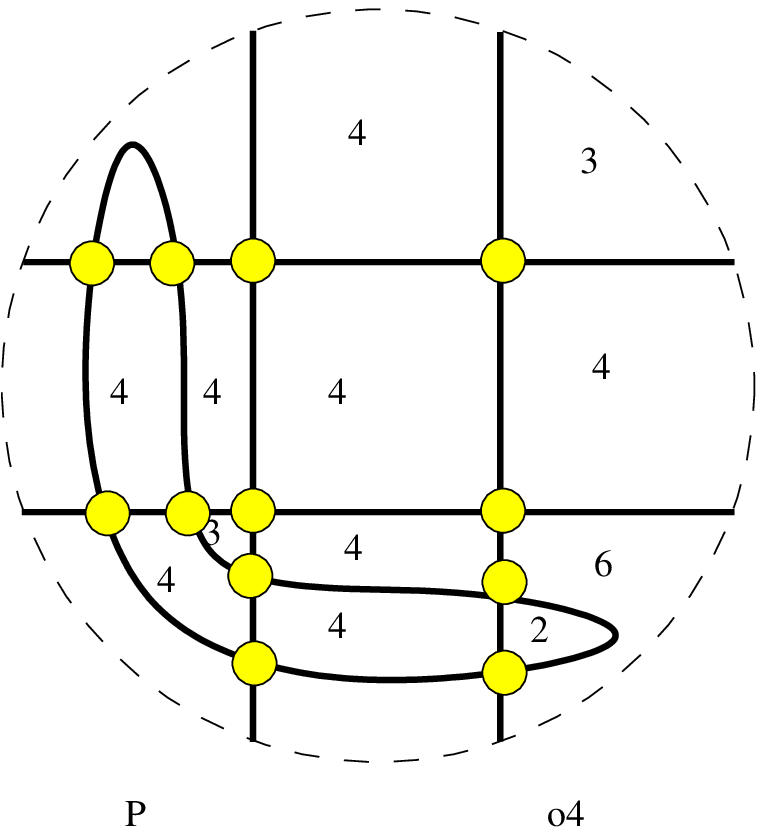}
\includegraphics[width = \factor\linewidth]{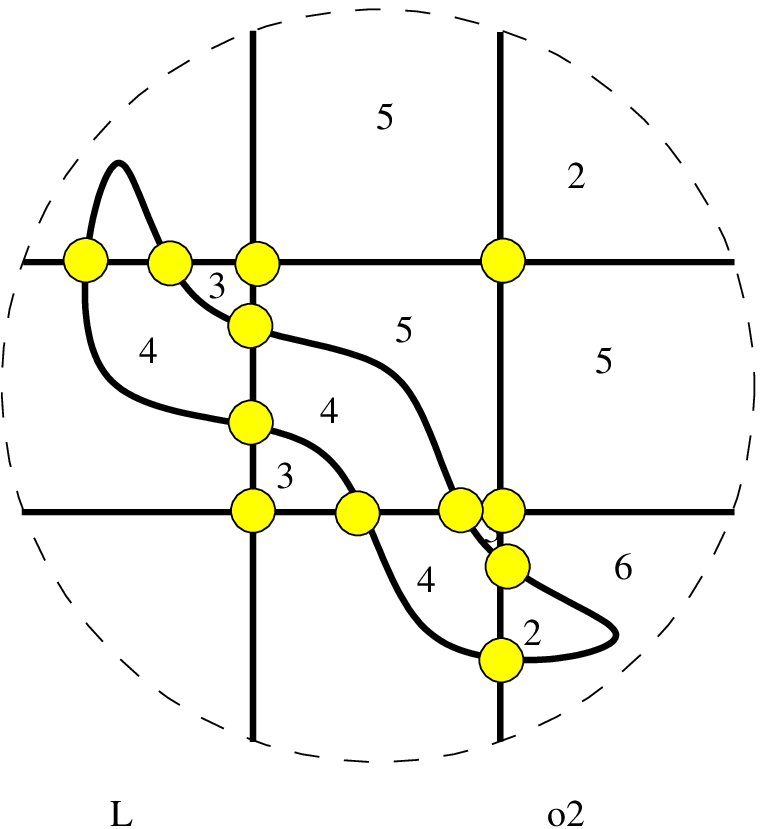}
\includegraphics[width = \factor\linewidth]{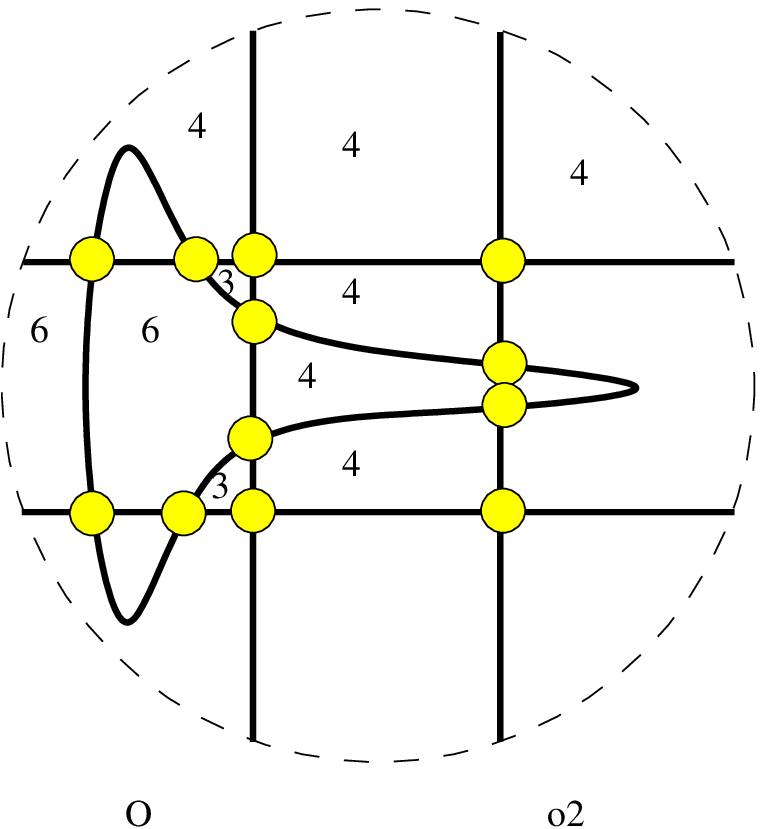}
\includegraphics[width = \factor\linewidth]{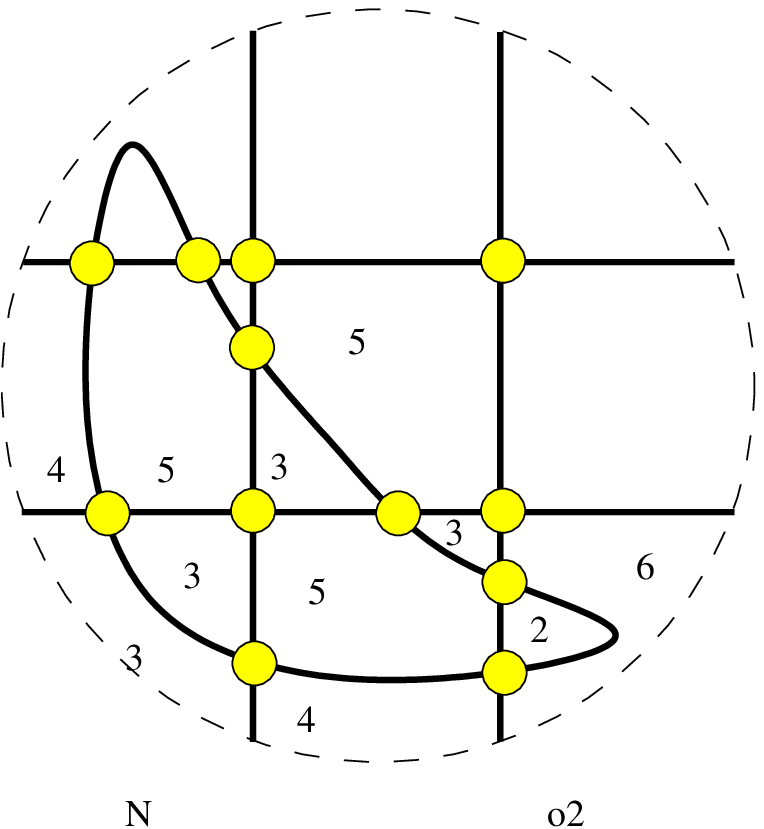}
\includegraphics[width = \factor\linewidth]{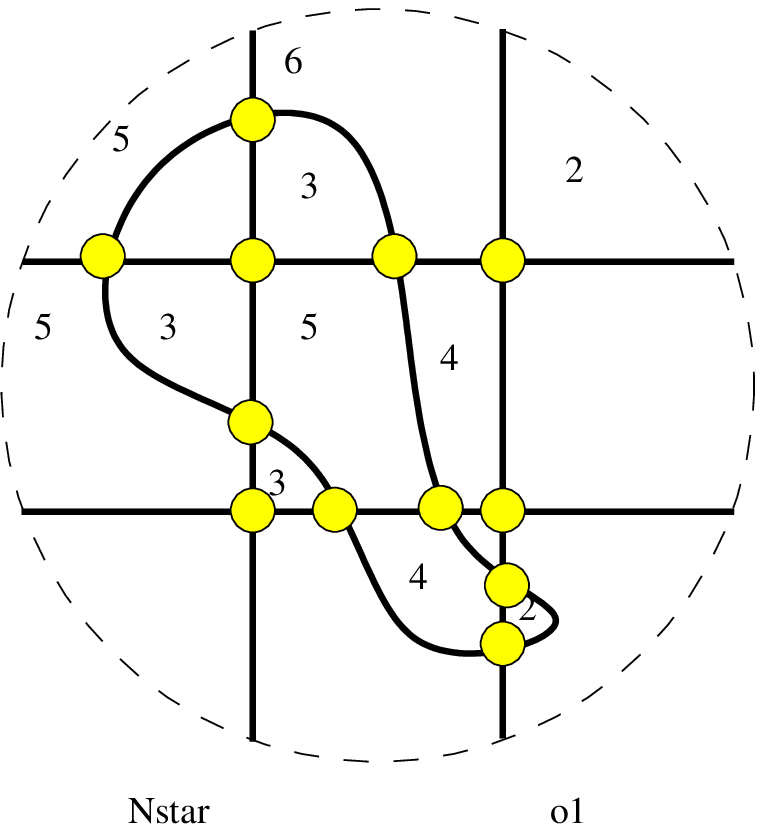}
\includegraphics[width = \factor\linewidth]{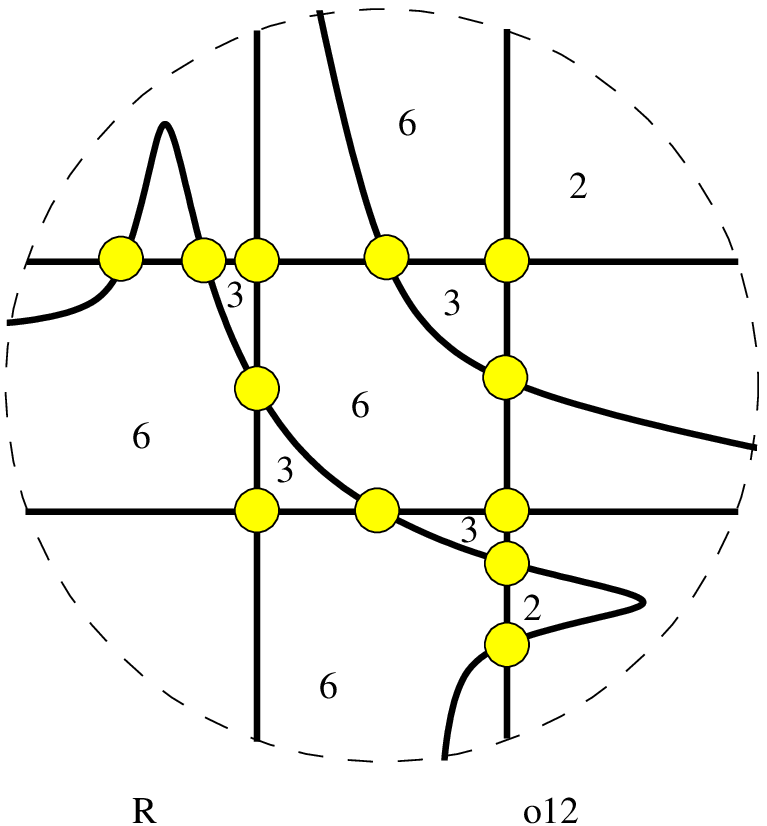}
\includegraphics[width = \factor\linewidth]{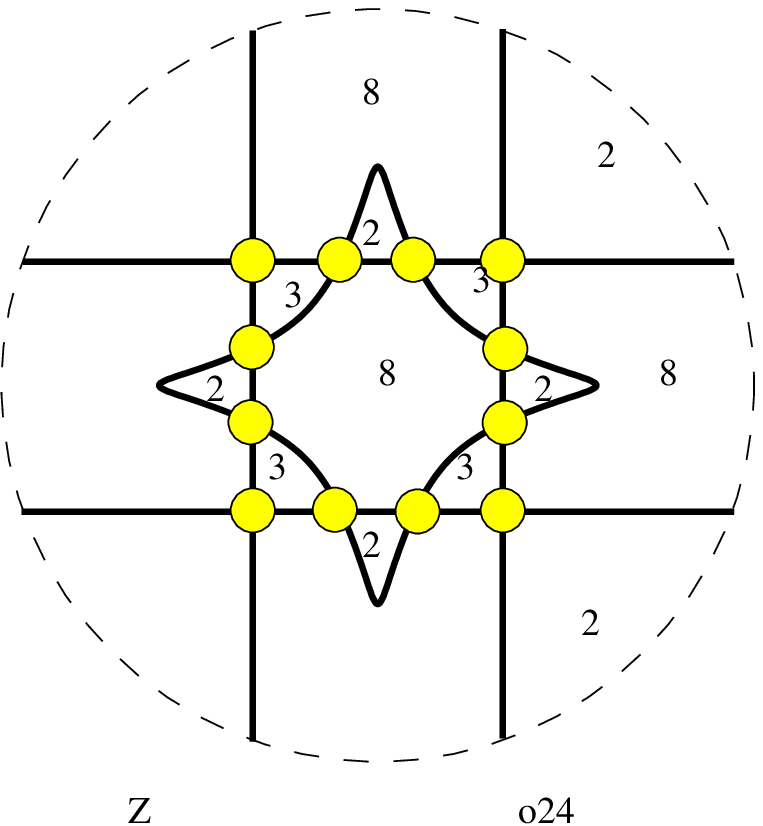}
\caption{Representatives of the thirteen isomorphism classes of simple arrangements of three double pseudolines. 
Each arrangement is labelled at its left bottom corner with the $2$-sequence of its numbers of two-cells of size~2 and~3,
and at its right bottom corner with the order of its automorphism group. 
\label{fulllist}}
\end{figure}

In this paper we prove that the one-extension spaces of double pseudoline arrangements are connected under mutations.
This connectedness result yields an incremental enumeration algorithm that takes as input the class of isomorphism classes of 
arrangements of $n$  double pseudolines and returns as output the class of isomorphism classes of 
arrangements of $n+1$ double pseudolines.
The key feature of this enumeration algorithm is that its working space is proportional to the size of the input and {\it not} to the size of the output;
it turns out to be sufficient in practice for the enumeration of simple arrangements of five 
double pseudolines using a relatively modest amount of process time: we establish in this paper that their number is $181\,403\,533$.

Double pseudoline arrangements have been introduced recently by Luc Habert and the third author of the paper
as a combinatorial abstraction of finite families of disjoint convex bodies of real two-dimensional projective geometries, 
in connection with their study of primitive operations for efficient visibility graph algorithms for planar convex shapes.
More precisely they have shown that the class of isomorphism classes of double pseudoline arrangements coincides  with the class 
of isomorphism classes of dual arrangements of finite families of pairwise disjoint convex bodies of projective geometries, 
 and they have exploited this latter result to give an axiomatic characterization of the class of {\it chirotopes} of finite families of 
pairwise disjoint oriented convex bodies of projective geometries very similar to classical axiomatic characterizations of the
class of chirotopes of finite families of points of projective geometries known under the generic name of (rank three) oriented matroids.
 (The {\it chirotope} of a finite family of pairwise disjoint oriented convex bodies of a projective geometry is defined as 
the map that assigns to each triple of convex bodies the set of relative positions of the bodies of the triple with respect to the lines of the projective geometry---the primitive operations to which we referred above consist precisely in the evaluation of the chirotope of the input configuration to the visibility graph algorithm.) 
These results provide our main motivation to enumerate double pseudoline arrangements
(cf. ~\cite{G-hp-adp-06}).

The paper is organized as follows. 
In Section~\ref{sec:preliminaries} we recall definitions and structural properties of arrangements of double 
pseudolines that on one hand motivate the paper and on the other hand are used subsequently in the technical developments. 
These structural properties are {\it the axiomatic characterization of the class 
of isomorphim classes of double pseudoline arrangements in terms of chirotopes}, {\it the connectedness of mutation graphs}, and 
the so-called {\it \PL} and {\it \GRT}. 
Still in Section~\ref{sec:preliminaries} we establish an enhanced version of the {\it \PL} using the {\it \GRT}.  
In Section~\ref{sec:connectedness}  we prove the connectedness result mentioned above using the enhanced version of the {\it \PL}. 
In Section~\ref{sec:incremental} we describe our incremental algorithm; in particular we explain how to add a double pseudoline to an arrangement of double pseudolines,
 a procedure that we believe to be of independant interest.  In Section~\ref{sec:results} we report some of the counting results for simple as well as non-simple arrangements of 
double pseudolines 
that we have obtained from our implementation of the incremental algorithm. 
(Counting results for the subclass of the so-called M{\"o}bius arrangements are also reported; this subclass captures exactly the class of chirotopes of affine configurations 
of disjoint convex bodies.) 
Finally in Section~\ref{sec:further} we conclude by  a short series of questions and possible developments suggested by this research. 
Throughout the paper we assume the reader to be familiar with the basic terminology of pseudoline arrangements~\cite{g-as-72,blswz-om-99,b-com-06,g-cpl-09,g-pa-04}.


\section{Preliminaries}\label{sec:preliminaries}
\subsection{Projective arrangements of double pseudolines}
Let $\mathcal{P}$ be a real two-dimen\-sio\-nal projective plane. In our drawings we will represent it by  a circular diagram  with antipodal boundary points identified.
A simple closed curve in $\mathcal{P}$ is a \emph{pseudoline} if it is non-separating (or equivalently, non-contractible), 
and a \emph{double pseudoline} otherwise.  
The complement of a pseudoline~$\gamma$ has one connected component: a topological disk~$\DS{\gamma}$. 
The complement of a double pseudoline~$\gamma$ has two connected components: a M{\"o}bius strip $\MS{\gamma}$ 
and a topological disk~$\DS{\gamma}.$ 
Observe that a pseudoline avoiding a double pseudoline is necessarily included in the M{\"o}bius strip surrounded by the double pseudoline
 (Fig.~\ref{projectiveplane}). 

\begin{figure}[!htb]
\centering
\includegraphics[width=0.90\linewidth]{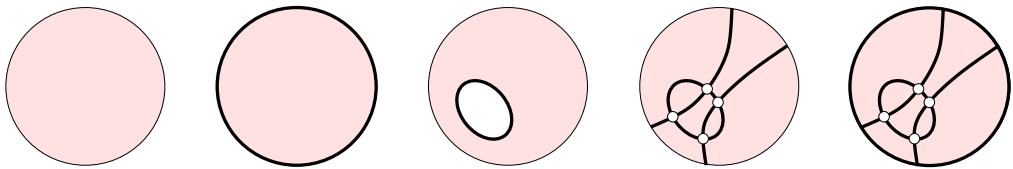}
\caption{A projective plane represented by a circular diagram with antipodal
points identified, a pseudoline, a double pseudoline, an arrangement of two double pseudolines, and 
a one-marked arrangement of two double pseudolines.
\label{projectiveplane}}
\end{figure}

An \emph{arrangement of double pseudolines} in $\mathcal{P}$ is a finite set of double pseudolines such that 
any two double pseudolines have exactly four intersection points, cross transversally at these points, and induce a cell decomposition of $\mathcal{P}$ 
(Fig.~\ref{projectiveplane}). 
We call it  {\it full} if the intersection of the topological disks surrounded by its double pseudolines is empty. 
The {\it order} of an arrangement is its number of  curves. 
As usual a {\it simple} arrangement is an arrangement where no three curves meet at the same point.
A {\it marked} arrangement of double pseudolines is an arrangement of double pseudolines augmented with an arrangement of pseudolines 
such that any pseudoline avoids one double pseudoline of the arrangement and intersects the other ones in exactly two points.
A {\it T-marked} arrangement of double pseudolines is an arrangement of double pseudolines augmented with an arrangement of pseudolines 
such that any pseudoline intersects one double pseudoline of the arrangement in exactly one point and intersects the other ones in exactly two points.

Two arrangements are \emph{isomorphic} if there is a homeomorphism of their underlying projective planes that sends one arrangement onto the other. 
The reader will easily check that there is a unique isomorphism class of arrangements of two double pseudolines, a unique isomorphism class of one-marked 
arrangements of two double pseudolines,  and 
(less easily) that there are thirteen isomorphism classes of
simple arrangements of three double pseudolines of which only one is full. They are depicted in Figure~\ref{fulllist}.
In this paper, we are interested in enumerating isomorphism classes of arrangements of double pseudolines. 
The number of isomorphism classes of arrangements of order $n$ will be denoted by $\NPA{n}$, the 
number of isomorphism classes of full arrangements of order $n$  will be denoted by~$\NPPA{n}$, and the number of isomorphism classes of
one-marked arrangements of order $n$ will be denoted by $\NOMPA{n}.$  
The corresponding number for simple arrangements will be denoted by $\NSPA{n}$, $\NSPPA{n}$ and $\NSOMPA{n}.$. Thus 
$\NPA{2} = \NSPA{2} = \NSOMPA{2}= 1$, $\NPPA{2} = \NSPPA{2} = 0$, $\NSPA{3} = 13$ and $\NSPPA{3} =1$.

\subsection{Chirotopes}
An {\it indexed oriented arrangement} is an arrangement whose curves are oriented and one-to-one indexed with some indexing set.
The \emph{chirotope} of an indexed oriented arrangement is the application that assigns to each triple of indices 
the isomorphism class of the subarrangement indexed by this triple.

As for pseudoline arrangements, the isomorphism class of an indexed oriented arrangement of double pseudolines only depends on its chirotope. 
Furthermore, given an application $\chi$ that assigns to each triple of indices the isomorphism class of an oriented arrangement of double pseudolines 
indexed by this triple, the two following properties are equivalent
\begin{enumerate}
\item $\chi$ is the chirotope of an indexed oriented arrangement,
\item the restriction of $\chi$ to the set of triples of any subset of at most five indices is the chirotope of an indexed oriented arrangement~\cite{G-hp-adp-06}.
\end{enumerate}

This result, called the {\it \AT} for double pseudoline arrangements, provides a strong motivation for enumerating arrangements of at most five double pseudolines. In particular 
the number $\NPCh{n}$ of chirotopes on $n$ double pseudolines (on a given indexing set) is given by the sum   
\begin{equation}
\sum_{k\neq 0} \frac{n! 2^n }{k} \GP{k}{n}
\end{equation}
where $\GP{k}{n}$ is the number of arrangements of $n$ double pseudolines with automorphism groups of order $k$; thus $\sum_{k\neq 0} \GP{k}{n} = \NPA{n}$.  
The corresponding numbers for simple arrangements will be denoted by $\NSPCh{n}$ and $\GSP{k}{n}.$ 
For example we read from the data of Figure~\ref{fulllist} that  
$\GSP{1}{3}=1$, $\GSP{2}{3}=5$, $\GSP{4}{3}= 2$, $\GSP{6}{3}=2$, $\GSP{12}{3} =1$, 
$\GSP{24}{3} =1$ and that $\GSP{k}{3} = 0$ for $k\notin \{1,2,4,6,12,24\}$; 
consequently there are $\NSPCh{3} = 214$ simple chirotopes on three double pseudolines.

\subsection{M{\"o}bius arrangements}
Let $\MOB$ be a M{\"o}bius strip and let $\OPC = \MOB \cup \{\infty \}$ be its one-point compactification.  
An arrangement of double pseudolines in $\MOB$ is an arrangement of double pseudolines in $\OPC$ with the property that the intersection of
the topological disks surrounded by the double pseudolines of the arrangement is nonempty and 
contains the point at infinity $\infty$. 
An indexed oriented M{\"o}bius arrangement is a M{\"o}bius arrangement whose  
double pseudolines are one-to-one indexed  and  oriented;  the arrangement is called acyclic if the orientations of the double pseudolines are coherent, in the sense that the double pseudolines are homotopic as oriented curves{}\footnote{In other words the curves are oriented according to the choice of a generator of the (infinite cyclic) 
fundamental group of the underlying M{\"o}bius strip.}.

As for projective double pseudoline arrangements (i) two M{\"o}bius arrangements are called isomorphic 
if one is the image of the other by a homeomorphism of their underlying M{\"o}bius strips, and (ii) 
the chirotope of an indexed oriented M{\"o}bius arrangement
is defined as the map that assigns to each triple of indices the isomorphism class of the (indexed and oriented) subarrangement indexed by this triple.  
The {\it \AT} for projective arrangements extends word for word for M{\"o}bius arrangements as well as acyclic M{\"o}bius arrangements.

We denote by $\NMA{n}$ the number of isomorphism classes of M{\"o}bius arrangements of order $n$,
by $\NAMA{n}$ the number of isomorphism classes of acyclically oriented M{\"o}bius arrangements of order $n$,
 and by $\NMCh{n}$ the number of isomorphism classes of indexed oriented 
acyclic M{\"o}bius arrangements of order $n$. 
The number $\NMCh{n}$ of acyclic M{\"o}bius chirotopes on $n$ double pseudolines is given by the sum   
\begin{equation}
\sum_{k\neq 0} \frac{n! 2 }{k} \GM{k}{n}
\end{equation}
where $\GM{k}{n}$ is the number of M{\"o}bius arrangements of $n$ double pseudolines with automorphism groups of order $k$.  
The corresponding numbers for simple arrangements will be denoted by $\NSMA{n}$, $\NSAMA{n}$, $\NSMCh{n}$, and $\GSM{k}{n}$.

\subsection{Mutations}
A \emph{mutation} is a local transformation of an arrangement $\Gamma$ that only destroys, or creates,  or inverts a fan of $\Gamma$;
more precisely, it  is a homotopy of arrangements in which only one curve $\gamma$ moves, reaching or leaving or first reaching and then leaving 
a single vertex of the remaining arrangement $\Gamma\smallsetminus\{\gamma\}$ (Fig.~\ref{mutation}). 

\begin{figure}[!htb]
\centering
\psfrag{w}{\small moving curve}
\psfrag{if}{\small I-mutation}
\psfrag{df}{\small D-mutation}
\psfrag{cf}{\small C-mutation}

\includegraphics[width=0.75\linewidth]{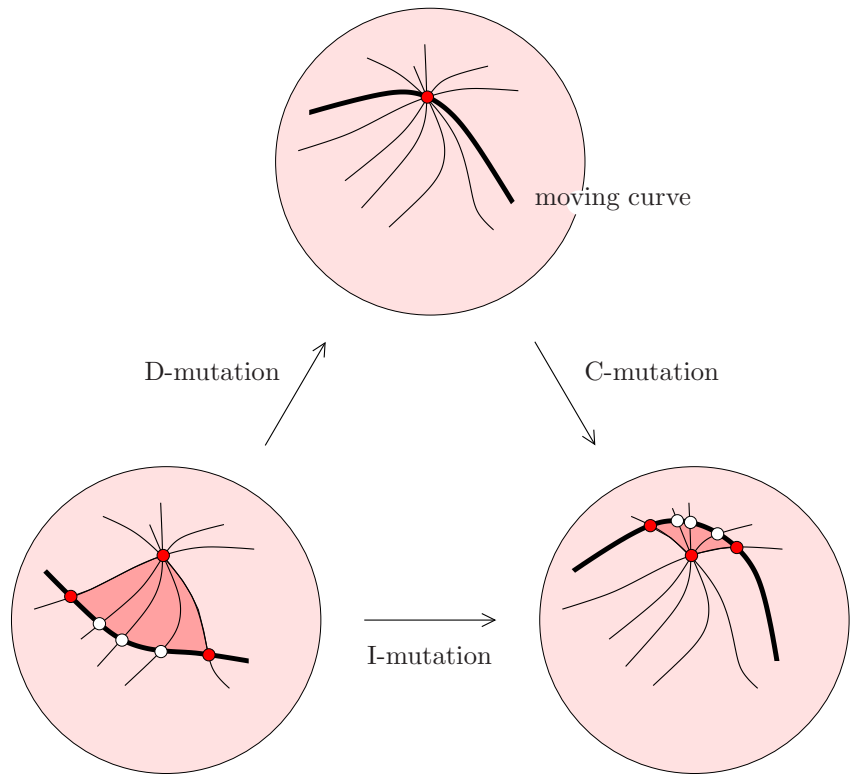}
\caption{A mutation destroys, creates, or inverts a fan supported by the moving curve.
(D-,C- and I- for destroying, creating and inverting.) \label{mutation}}
\end{figure}

As for pseudoline arrangements iterating this local transformation is in fact sufficient to obtain all possible arrangements of given order: 
{\it any two arrangements of the same order are homotopic via a finite sequence of mutations followed by an isotopy; in other words 
the graph of mutations on isomorphism classes of arrangements of given order is connected.}
The graph of mutations on simple arrangements of three double pseudolines is depicted in Figure~\ref{fig:mutation}. 

\begin{figure}[!htb]
\centering
\footnotesize
\tiny
\psfrag{8}{} \psfrag{7}{} \psfrag{6}{} \psfrag{5}{} \psfrag{4}{} \psfrag{3}{} \psfrag{2}{}
\psfrag{o24}{24} \psfrag{o12}{12} \psfrag{o2}{2} \psfrag{o4}{4} \psfrag{o6}{6} \psfrag{o1}{1}

\psfrag{A}{$04$}
\psfrag{B}{$07$}
\psfrag{C}{$18$}
\psfrag{CCone}{$18_1$}
\psfrag{D}{$25$} 
\psfrag{F}{$07$} 
\psfrag{G}{$37$}
\psfrag{H}{$15$}
\psfrag{HCone}{$15_1$}
\psfrag{J}{$43$}
\psfrag{JCone}{$43_1$}
\psfrag{K}{$25$}
\psfrag{KCone}{$25_1$}
\psfrag{L}{$33$}
\psfrag{LCone}{$33_1$}
\psfrag{M}{$32$}
\psfrag{N}{$25$}
\psfrag{Nstar}{$25^*$}
\psfrag{NstarCone}{$25^*_1$}
\psfrag{NstarCtwo}{$25^*_2$}
\psfrag{O}{$32$}
\psfrag{OCone}{$32_1$}
\psfrag{OCtwo}{$32_2$}
\psfrag{P}{$22$}
\psfrag{PCone}{$22_1$}
\psfrag{Q}{$25$}
\psfrag{R}{$36$}
\psfrag{Z}{$64$}
\includegraphics[width=.99\linewidth]{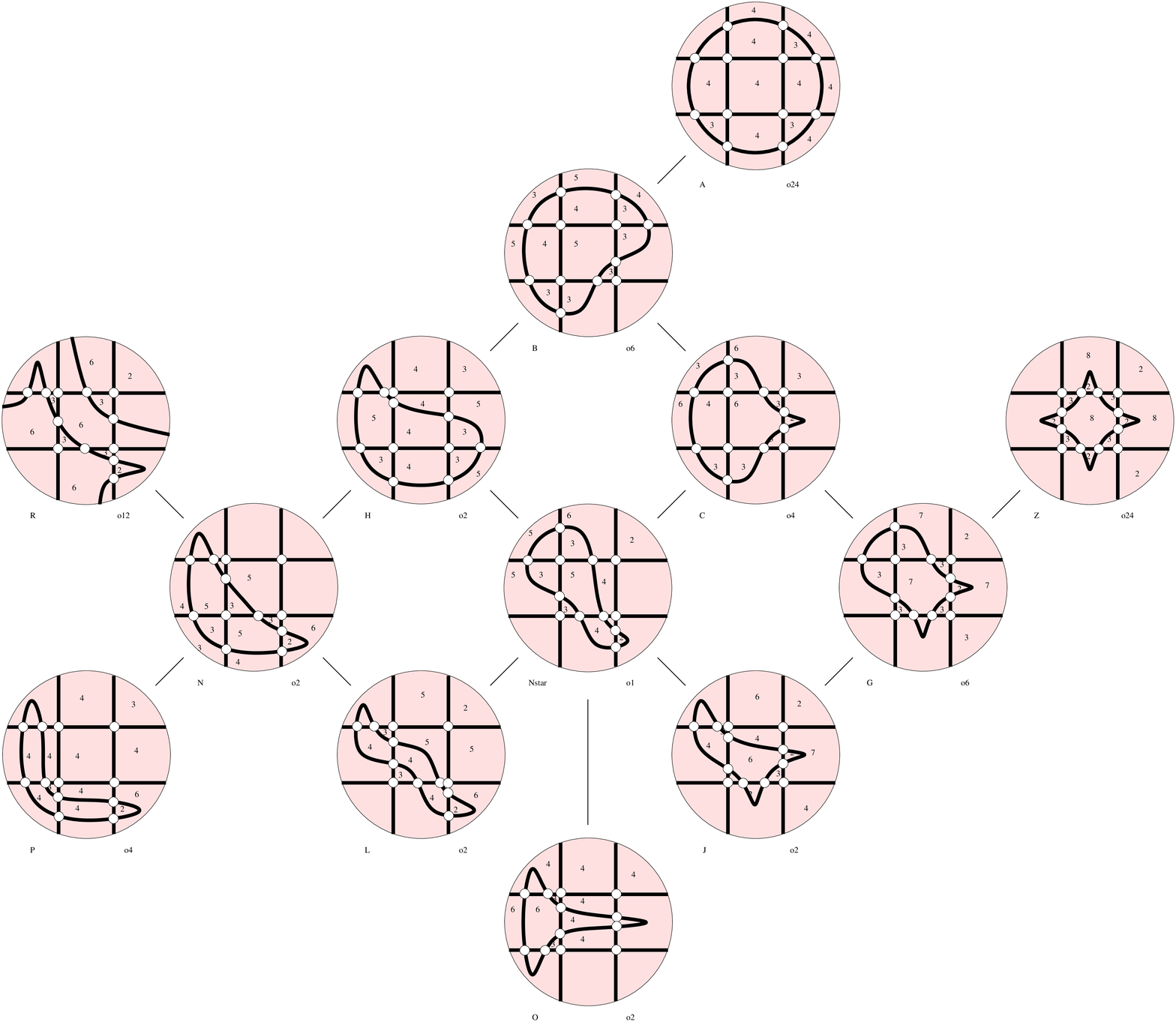}
\caption{The graph of mutations on simple arrangements of three double pseudolines.}
\label{fig:mutation}
\end{figure}

This connectedness result is proved in~\cite{G-hp-adp-06} 
by reduction to the case of arrangements of pseudolines.  The argument consists in mutating each double pseudoline of the arrangement until it becomes 
{\it thin} (that is, until there remains no vertex of the arrangement inside its enclosed M{\"o}bius strip), and to observe that arrangements of thin double pseudolines behave exactly as
simple pseudoline arrangements.
The fact that an arrangement can be mutated until all its double pseudolines become thin is ensured by the following crucial lemma: 
\begin{lemma}[Pumping Lemma~\cite{G-hp-adp-06}]\label{pl} 
Let $\Gamma$ be a simple arrangement of double pseudolines and let $\gamma$ be a distinguished double pseudoline of $\Gamma$.
Assume that there is a vertex
of $\Gamma$ lying in the interior of the
M{\"o}bius strip $\MS{\gamma}$ bounded  by $\gamma$.
Then there is a triangular face of $\Gamma$ supported by $\gamma$ and included in $\MS{\gamma}$. \qed
\end{lemma}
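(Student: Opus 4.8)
The plan is to adapt to the M\"obius strip $\MS{\gamma}$ the classical fact that in a simple arrangement of pseudolines every line bounds a triangle. Since a vertex of $\Gamma$ incident to $\gamma$ lies on $\gamma=\partial\MS{\gamma}$, the hypothesis says that two double pseudolines $\alpha,\beta\in\Gamma\smallsetminus\{\gamma\}$ cross at a point $v$ in the \emph{interior} of $\MS{\gamma}$. I would look for the desired face among the \emph{$\gamma$-wedges}: for an interior vertex $w=\alpha'\cap\beta'$, follow the two branches of $\alpha',\beta'$ that point towards $\gamma$ until they first return to $\gamma$, at points $p,q$; together with the sub-arc of $\gamma$ from $p$ to $q$ they enclose a disk $\Delta(w)\subset\MS{\gamma}$. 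A $\gamma$-wedge that contains no vertex of $\Gamma$ in its interior and whose base $pq$ carries no further crossing is exactly a triangular face supported by $\gamma$, so it suffices to produce one such minimal wedge.

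First I would set up the extremal argument: fix a monotone inward sweep $(\gamma_t)_{t\in[0,1)}$ of $\gamma$ into $\MS{\gamma}$ converging to (a double cover of) the core pseudoline of the strip, chosen generic with respect to the finitely many interior vertices, and use it to assign to each interior vertex $w$ the time $t(w)$ at which the moving front reaches it, that is, a combinatorial depth of $w$ below $\gamma$. The interior-vertex hypothesis guarantees that at least one $\gamma$-wedge is defined, and I would then pick an interior vertex $w$ for which $\Delta(w)$ is minimal for inclusion (or of minimal depth) and argue that $\Delta(w)$ must be a triangle: any vertex strictly inside $\Delta(w)$ would be the apex of a strictly smaller wedge, and any curve crossing the two sides of $\Delta(w)$ would create such an interior vertex, contradicting minimality.

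The step I expect to be the genuine obstacle is the one feature absent from the pseudoline case: a double pseudoline $\delta\neq\gamma$ meets $\gamma$ in four points and therefore contributes to $\MS{\gamma}$ two arcs both of whose endpoints lie on $\gamma$, each bounding a ``lune'' with $\gamma$. Such a lune arc may cross the base $pq$ of a wedge and cut it into several edges \emph{without} creating any interior vertex, so the naive choice ``take the first vertex reached by the sweep'' can land on a pentagon rather than a triangle, and an empty lune is itself a bigon supported by $\gamma$ that is not a triangle. The heart of the proof is thus to organize the descent so that whenever the current wedge is spoiled by a lune one descends \emph{into} the lune and there exhibits a crossing of strictly smaller depth, using that $\Gamma$ is simple and induces a cell decomposition; the interior vertex $v$ supplied by the hypothesis is what keeps this descent from terminating on an empty bigon.

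An alternative route, which I would keep in reserve, is to pass through the \GRT\ and argue in the primal. Representing $\Gamma$ by pairwise disjoint convex bodies $C_1,\dots,C_n$ with $\gamma$ dual to $C_1$, the region $\MS{\gamma}$ becomes the set of lines secant to $C_1$, an interior vertex becomes a common tangent of two other bodies that crosses $C_1$, and a triangular face supported by $\gamma$ becomes a free triangle cut out by the three bitangents of a triple $\{C_1,C_i,C_j\}$. In this formulation the statement reduces to producing a minimal free triangle on $C_1$ from the hypothesis that some bitangent of two bodies is secant to $C_1$, which can be attacked by an innermost/closest-bitangent extremal argument; I would use the combinatorial sweep as the main proof and the convex-body picture as a guide and a check.
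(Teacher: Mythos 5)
First, a point of comparison: the paper contains no proof of Lemma~\ref{pl} to match your attempt against --- the lemma is stated with a terminating \qed\ and imported from~\cite{G-hp-adp-06}, where it is the crucial ingredient behind the connectedness of mutation graphs. So your proposal can only be judged on its own merits, and as it stands it has two genuine gaps.

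The first is a false topological premise. You organize the ``heart of the proof'' around \emph{lunes}: arcs traced on $\MS{\gamma}$ by a double pseudoline $\delta\neq\gamma$, ``each bounding a lune with $\gamma$.'' But no such lunes exist. Each of the two connected components of $\delta\cap\MS{\gamma}$ (the \gcurves\ of the paper) is \emph{non-contractible} in the M{\"o}bius strip: the subarrangement $\{\gamma,\delta\}$ has a unique isomorphism class, its two bigons lie in $\DS{\gamma}\cap\DS{\delta}$, outside both M{\"o}bius strips, and every pseudoline contained in $\MS{\gamma}$ crosses each \gcurve\ exactly once --- this is precisely the observation the paper itself makes in the proof of Theorem~\ref{main}, that $\gp_*$ meets a \gcurve\ in at most one point. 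Consequently your worry about ``empty lunes'' terminating the descent is vacuous (no bigon of $\Gamma$ inside $\MS{\gamma}$ can be supported by $\gamma$, since its second side would be an entire contractible \gcurve), but the same non-contractibility undermines your wedge construction as stated: from an interior vertex $w$ the branches of the two curves do \emph{not} automatically close up with a sub-arc of $\gamma$ into a disk $\Delta(w)\subset\MS{\gamma}$. Only the branch pairs whose concatenation is contractible do (generically two of the four, with a forced choice of which of the two arcs of $\gamma$ between $p$ and $q$ serves as base), and the proposal, written in the lune picture, never makes or justifies this choice.

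The second gap is the extremal step itself, which is the actual difficulty and is left unestablished. Inclusion-minimality does not survive the spoiler you must handle: a \gcurve\ that crosses \emph{both} sides of $\Delta(w)$, entering and leaving without meeting the base, creates vertices on the sides whose natural wedges are not contained in $\Delta(w)$, so ``any curve crossing the two sides would create a strictly smaller wedge'' is unjustified. Your fallback measure, the sweep-depth $t(w)$, is defined with respect to an arbitrary generic isotopy $\gamma_t$ unrelated to the arrangement, so nothing forces a vertex on a side of $\Delta(w)$ to be reached before the apex $w$; and the classical pseudoline substitute --- counting the curves separating $w$ from $\gamma$ --- is unavailable here because a single non-contractible arc does not separate $\MS{\gamma}$. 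This is exactly where the M{\"o}bius topology bites, and it is the step a complete proof must supply. Finally, your reserve route through the \GRT\ does mirror how the paper derives the stronger Lemma~\ref{evpl} (``at least two fans,'' without simplicity) by continuous motions in the primal; but within the development of~\cite{G-hp-adp-06} the \GRT\ sits downstream of the thinning machinery that rests on the Pumping Lemma, so offered as a proof of Lemma~\ref{pl} itself that route would need an independent argument to escape circularity.
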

Observe that the Pumping Lemma allows to turn any arrangement of double pseudolines into a marked arrangement with any prescribed numbers 
of pseudolines missing the double pseudolines 
(Fig.~\ref{pumping}). 
 
\begin{figure}[!htb]
\includegraphics[width=0.90\linewidth]{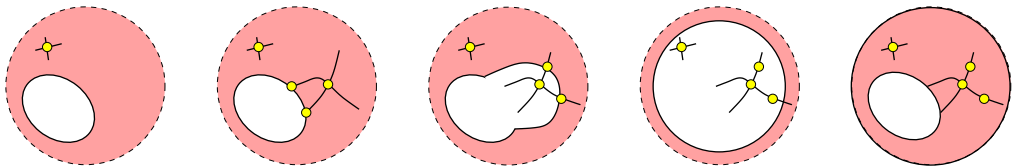}
\caption{Pumping vertices out of the M{\"o}bius strip of a double pseudoline and turning an arrangement into a marked one.\label{pumping}}
\end{figure}

From the connectedness property of mutation graphs it ensues an enumeration algorithm that simply consists in traversing the graph of mutations 
starting from any given arrangement. 
This  algorithm is sufficient in practice for the enumeration of (not necessarily simple) arrangements of three or four double pseudolines but already fails 
for simple arrangements of five double pseudolines because of random-access memory limitations.

In order to go a little bit further (and particularly, to enumerate arrangements of five double pseudolines), we  show that the mutation graph on the space of 
one-extensions of any given arrangement is connected.  
From this result we derive a simple enumeration algorithm for the class of arrangements of order $n+1$ based on 
the traversal of the graphs of mutations of the one-extension spaces of the arrangements of order $n$ whose working space is only 
proportional to the number of isomorphism classes of double pseudoline arrangements of order $n$ times the space of encoding of an arrangement.
It turns out that this incremental algorithm is sufficient in practice for the enumeration of simple arrangements of five double pseudolines.

 
\subsection{Geometric representation theorem} A (real two-dimensional) {\it projective geometry} is a topological projective point-line incidence geometry $(\pp,\lpp)$ 
whose point space~$\pp$ is a projective plane 
and whose line space $\lpp$ is a subspace of the space of pseudolines of~$\pp$; as usual the dual of a point $p$ of a projective geometry is denoted $p^*$ 
and is defined as its set of incident lines.

The {\it duality principle} for projective geometries asserts that the dual $(\lpp,\pp^*)$ of a projective geometry $(\pp,\lpp)$ is still a projective geometry,
i.e., $\lpp$ is a projective plane and~$\pp^*$ is a subspace of the space of pseudolines of $\lpp$; in particular the dual of a finite point set is an arrangement of pseudolines.
 The {\it  \GRT} for pseudoline arrangements asserts that the converse is true:  
{\it any arrangement of pseudolines is isomorphic to the dual arrangement of a finite  set of points of a 
projective geometry}. (This is an easy consequence of the duality principle for projective geometries combined with the embedabbility of any arrangement of pseudolines 
in the line space of a projective geometry~\cite{gpwz-atp-94}, see the discussion in~\cite{G-hp-adp-06}.)

The {\it \GRT}  for pseudoline arrangements has the following extension for marked arrangements of double pseudolines:  {\it 
first, the dual of a convex body of a projective geometry---defined as its set of tangent lines, i.e., the set of lines touching the body but not its interior---is 
a double pseudoline in the line space of the projective geometry; 
second, the dual arrangement of a finite set of pairwise disjoint marked convex bodies is a marked arrangement of double pseudolines; third, 
 any marked arrangement of double pseudolines is isomorphic to the dual arrangement 
of a finite set of pairwise disjoint marked convex bodies of a projective geometry~\cite{G-hp-adp-06}.} 
In other words: {\it the class of  isomorphism classes of marked arrangements of double pseudolines coincides with the class of isomorphism classes of 
dual arrangements of finite families of disjoint marked convex bodies of projective geometries.}
Similarly the {\it \GRT} for  marked M{\"o}bius arrangements asserts that their class coincides with the class of dual arrangements of 
finite families  of disjoint marked convex bodies of affine  geometries~\cite{G-hp-adp-06}.

From this result, it is easy, using continuous motions, to derive the following extension of the Pumping Lemma.
\begin{lemma} \label{evpl}
Let $\Gamma$ be an arrangement of double pseudolines and let $\gamma \in \Gamma$.
Assume that there is a vertex of $\Gamma$ lying in the interior of the
M{\"o}bius strip $\MS{\gamma}$ bounded  by $\gamma$ and that the vertices of $\Gamma$ supported by $\gamma$ are simple (i.e., have degree four).
Then there is at least two fans of $\Gamma$ included in $\MS{\gamma}$ and supported by $\gamma$. \qed
\end{lemma}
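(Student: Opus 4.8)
The plan is to transport everything to the dual side via the Geometric Representation Theorem and then read off the fans from a continuous deformation of a single convex body. By the Geometric Representation Theorem (in the form recalled in the introduction, identifying double pseudoline arrangements with duals of families of pairwise disjoint convex bodies), I would realize $\Gamma$ as the dual arrangement of pairwise disjoint convex bodies $C=C_1,\dots,C_n$ of a projective geometry, with $\gamma=C^*$. The dictionary to establish first is: $\MS{\gamma}$ is the set of lines \emph{secant} to $C$ (a M\"obius band, being the space of secant chords of a convex body) and $\DS{\gamma}$ the set of lines \emph{missing} $C$; an interior vertex of $\MS{\gamma}$ is a common tangent of two bodies $C_i,C_j$ with $i,j\neq 1$ that happens to be secant to $C$; the four vertices of $\gamma\cap C_j^*$ are the four common tangents of $C$ and $C_j$, and their simplicity means no third body is tangent there. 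Finally, a fan of $\Gamma$ supported by $\gamma$ and lying in $\MS{\gamma}$ is a triangular face with one edge on $\gamma$ and opposite tip an interior vertex $v$, and mutating $\gamma$ across $v$ amounts to deforming $C$ so that the associated common tangent line passes from secant to tangent to missing. The hypothesis that the vertices on $\gamma$ are simple is exactly what guarantees that these fans are honest triangles and that the corresponding sweeps of $\gamma$ are single-vertex (simple) mutations.

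Next I would deflate $C$ continuously to an interior point $p$ through pairwise disjoint convex bodies $C_t$ (for instance by scaling toward $p$, which keeps $C_t$ inside $C$ and hence disjoint from the others). Then $\gamma_t=C_t^*$ shrinks and $\MS{\gamma_t}$, the secant lines of $C_t$, shrinks to the \emph{core} pseudoline $p^*$ (the lines through $p$, which avoids $\gamma$ and crosses every other $C_j^*$ exactly twice). Every interior vertex present at $t=0$ must leave $\MS{\gamma_t}$ during the deflation, and each departure is a mutation of $\gamma_t$ across a single vertex that sweeps a fan. Taking the deflation generic so that these events are simple and occur one at a time, the \emph{first} event happens while the combinatorial type is still that of $\Gamma$; the fan it sweeps is therefore a fan of $\Gamma$ supported by $\gamma$ and contained in $\MS{\gamma}$. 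This reproves the Pumping Lemma and delivers the first of the two required fans.

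The second fan is where the extra content lies, and non-orientability is the key. The boundary $\gamma$ of the M\"obius band $\MS{\gamma}$ \emph{double-covers} its core: as the point of tangency runs once around $\partial C$, the tangent direction turns by a full $2\pi$, i.e.\ twice around the pencil of directions through $p$; correspondingly each $C_j^*$ meets $\MS{\gamma}$ in two arcs, each crossing the core $p^*$ exactly once. Because of this degree-two covering, the ``first-exit'' data of the deflation — equivalently, a triangular ear of the induced arrangement incident to $\gamma$ — cannot occur only once as one travels around $\gamma$: a single such ear would contradict the covering being of degree two. I would make this precise by running the deflation argument from the two sheets of the covering (so that the fan detected on each sheet gives a distinct ear of $\Gamma$), or, combinatorially, by bounding below by $2$ the number of triangular faces of the arrangement induced on $\MS{\gamma}$ incident to $\gamma$ through a count on the M\"obius band exploiting the double cover.

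The step I expect to be the real obstacle is precisely this passage from one ear to two: one must rule out the degenerate possibility that all the boundary-incident triangular faces of $\MS{\gamma}$ coincide or cluster on a single sheet, and here the simplicity of the vertices on $\gamma$ is used to keep the fans along $\gamma$ genuine triangles so that the covering-theoretic doubling cannot be absorbed by a coincidence. Everything else — the dictionary, the monotone shrinking to the core, and the identification of the first mutation as a fan of $\Gamma$ — is routine once the dual picture is set up.
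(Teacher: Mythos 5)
Your overall route---realize $\Gamma$ via the \GRT\ as the dual arrangement of a family of pairwise disjoint convex bodies and extract fans from a continuous deformation of the body $C$ dual to $\gamma$---is precisely the one the paper intends: the lemma carries no displayed proof and is justified only by the remark that it follows from the \GRT\ ``using continuous motions''. Your first half correctly implements this: your dictionary ($\MS{\gamma}$ is the space of secants of $C$, interior vertices are bitangents of other bodies secant to $C$) is right, and under a scaling of $C$ toward a generic interior point $p$ the strip $\MS{\gamma_t}$ shrinks monotonically, so interior vertices can only exit, at least one must exit (a generic $p$ lies on none of the finitely many bitangent lines), and the first exit event sweeps a fan of $\Gamma$ itself. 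One local correction: the simplicity hypothesis on the vertices lying on $\gamma$ does \emph{not} make the fans triangles---the lemma allows $\Gamma$ to be non-simple, so the apex of a fan may be a vertex of high degree and the fan is then a union of cells sharing that apex. What simplicity of the vertices on $\gamma$ actually buys is that $\gamma$ passes through no vertex of $\Gamma\smallsetminus\{\gamma\}$, so that fans supported by $\gamma$ are nondegenerate and the first event of a generic deflation is an honest fan of the original arrangement.

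The genuine gap is the second fan, which is the entire content of this lemma beyond the Pumping Lemma and exactly what the proof of Theorem~\ref{main} consumes (having \emph{two} fans is what guarantees one that misses the prescribed cell $\sigma'$). Your covering-space mechanism does not work as stated: the boundary $\gamma$ of the M{\"o}bius strip double-covers its core circle by a \emph{connected} covering, so there are no ``two sheets'' from which to run the deflation twice---a deflation is one global motion of $C$, not a per-sheet construction---and there is no parity obstruction either, since the number of fans supported by $\gamma$ inside $\MS{\gamma}$ need not be even, so no degree-two count can exclude the value one. You flag this passage yourself as ``the real obstacle'', and nothing in the proposal fills it. A completion along your own lines would need a second, differently aimed deformation whose first event is provably distinct from the first: for instance, shrinking $C$ toward the chord $t\cap C$ of the bitangent $t$ that is the apex of the fan already found keeps $t$ secant throughout, so any first event of this second motion has a different apex; one must then handle separately the degenerate case where $t$ is the only interior vertex, in which the two fans share the apex $t$ and correspond to $t$ exiting across the two arcs of $\partial C$ cut off by the chord. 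As written, your argument establishes only the one-fan Pumping Lemma, not the stated two-fan strengthening.
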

This enhanced version of the Pumping Lemma allows to mutate any double pseudoline until it becomes thin while keeping fixed any one of its points. 
Replacing the resulting thin double pseudoline by one of its core pseudoline and putting back in the arrangement the mutated double pseudoline   
we get a T-marked version of the initial arrangement with  one pseudoline touching the arrangement at the point kept fixed (Fig.~\ref{pumpingrevised}).  

\begin{figure}[!htb]
\includegraphics[width=0.90\linewidth]{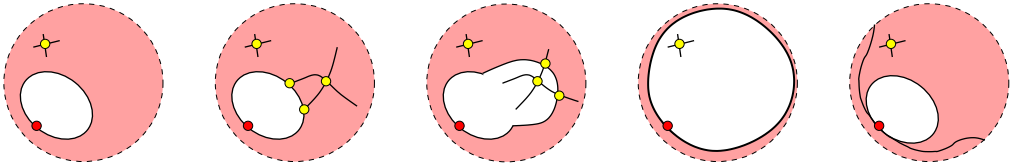}
\caption{Pumping vertices out of the M{\"o}bius strip of a double pseudoline while keeping fixed one of the points of the double pseudoline and turning an arrangement into 
a T-marked one.}
\label{pumpingrevised}
\end{figure}

By a repeated application of this T-marking operation one can turn any double pseudoline arrangement into a T-marked version
 with any prescribed pattern of touching points. In particular if we choose as set of touching points the 
set of pairs $(\gamma, x)$ where $\gamma$ ranges over the set of double pseudolines of the arrangement $\Gamma$ and where $x$ ranges over 
the set $V_\gamma$ of  vertices of the arrangement lying on $\gamma$ we obtain a 
{\it linearization} of the arrangement $\Gamma$, i.e., 
an arrangement of pseudolines $\ell(\gamma,x)$, $x \in V_\gamma$, $\gamma \in \Gamma$,  
such that 
(1) the sub-arrangement of  $\lgx$, $x \in V_\gamma$, is a cyclic arrangement with a distinguished central cell $\CA{\gamma}$ 
(the one including the topological disk $\DS{\gamma}$), 
(2)  the family of boundaries of the 
$\CA{\gamma}$, $\gamma \in \Gamma$, is an arrangement of double pseudolines 
isomorphic to $\Gamma$ (Fig.~\ref{linearization}).

\begin{figure}[!htb]
\psfrag{x}{$x$}
\psfrag{gamma}{$\gamma$}
\psfrag{lgx}{$\ell(\gamma,x)$}
\psfrag{gp}{$\gamma'$}
\psfrag{lgpx}{$\ell(\gamma',x)$}

\centering
\includegraphics[width=0.900000075000\linewidth]{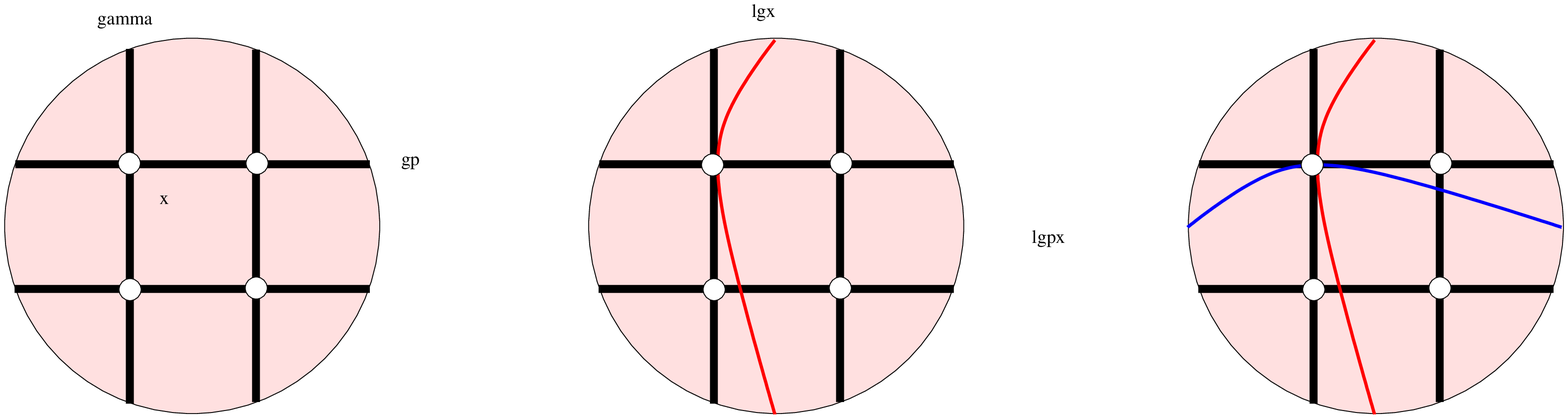}
\caption{Two steps of the linearization process of an arrangement of two double pseudolines. \label{linearization}}
\end{figure}

If we think of our double pseudoline arrangement $\Gamma$ as the dual of a configuration of smooth  convex bodies $\Delta$, 
this linearization process corresponds to insert a point $p(\gamma, x)$ on the intersection between $x$ and the convex body $\delta$ 
 corresponding to $\gamma$ and to replace $\delta$ by the convex hull $C_\delta$ of the $p(\gamma, x)$, $x \in\gamma$, with respect to any line 
avoiding $\delta$. The dual arrangement of the convex bodies $C_\delta$ is then isomorphic to the arrangement $\Gamma$.  
Concatening this linearization algorithm with any  algorithmic version of the 
\GRT\ for pseudoline arrangements (e.g.~\cite{as-plada-05}) we thus get an algorithmic version of the \GRT\
for double pseudoline arrangements (Fig.~\ref{cvgrt}).     

\begin{figure}[!htb]
\centering
\psfrag{duality}{$\stackrel{\mathrm{duality}}{\longrightarrow}$}
\psfrag{linearbis}{$\stackrel{\mathrm{linearization}}{\longrightarrow}$}
\psfrag{duality}{${\longrightarrow}$}
\psfrag{linearbis}{${\longrightarrow}$}
\psfrag{linear}{linearization}
\psfrag{grtpoint}{GRT for pseudoline arrang.}
\includegraphics[width=0.90000\linewidth]{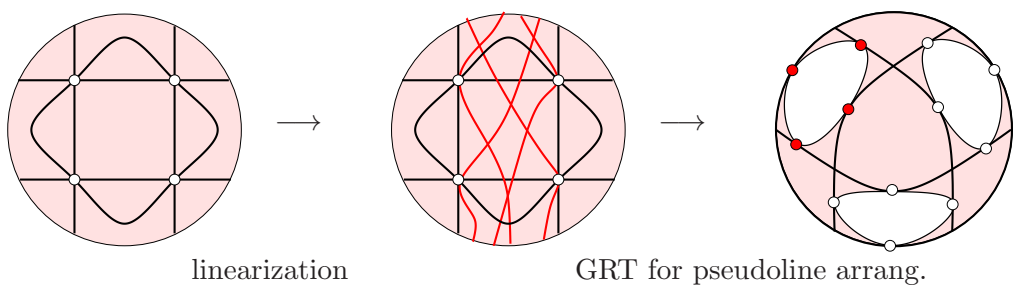}
\caption{Concatening the linearization process of an arrangement with a constructive version of the \GRT\ for pseudoline 
arrangements yields to a constructive version of the \GRT\ for double pseudoline arrangements. \label{cvgrt}}
\end{figure}


The enhanced version of the Pumping Lemma will be used both 
in our enumeration algorithm---to insert a double pseudoline in a given double pseudoline arrangement---and 
in the proof of the connectedness of the space of one-extensions of a given arrangement, proof to which we now come.


\section{Connectedness of the spaces of one-extensions}\label{sec:connectedness}

A one-extension of an arrangement of $n$ double pseudolines $\Gamma$ is an
arrangement of $n+1$ double pseudolines $\Gamma'$ of which $\Gamma$ is a
subarrangement.  The double pseudoline of $\Gamma'$ not in $\Gamma$ is called 
the one-extension element.  

For example Figure~\ref{nodesoneextensionthr}  depicts representatives of the twenty-three  simple (isomorphism classes of) one-extensions of 
an arrangement of two double pseudolines, with the curved double pseudoline in the role of the one-element extension.
\begin{figure}[!htb]
\footnotesize
\def\factor{0.15315015000023}
\centering
\psfrag{8}{\small 8} \psfrag{7}{\small 7} \psfrag{6}{\small 6} \psfrag{5}{\small 5} \psfrag{4}{\small 4} \psfrag{3}{\small 3} \psfrag{2}{\small 2}
\psfrag{8}{} \psfrag{7}{} \psfrag{6}{} \psfrag{5}{} \psfrag{4}{} \psfrag{3}{} \psfrag{2}{}
\psfrag{CC1}{$18_1$}
\psfrag{HC1}{$15_1$}
\psfrag{JC1}{$43_1$} \psfrag{KC1}{$25_1$} \psfrag{LC1}{$33_1$} \psfrag{NstarC1}{$25^*_1$} 
\psfrag{NstarC2}{$25^*_2$} \psfrag{OC1}{$32_1$} \psfrag{OC2}{$32_2$} \psfrag{PC1}{$22_1$}

\psfrag{A}{$04$}
\psfrag{B}{$07$}
\psfrag{C}{$18$}
\psfrag{CCone}{$18_1$}
\psfrag{D}{$25$} 
\psfrag{F}{$07$} 
\psfrag{G}{$37$}
\psfrag{H}{$15$}
\psfrag{HCone}{$15_1$}
\psfrag{J}{$43$}
\psfrag{JCone}{$43_1$}
\psfrag{K}{$25$}
\psfrag{KCone}{$25_1$}
\psfrag{L}{$33$}
\psfrag{LCone}{$33_1$}
\psfrag{M}{$32$}
\psfrag{N}{$25$}
\psfrag{Nstar}{$25^*$}
\psfrag{NstarCone}{$25^*_1$}
\psfrag{NstarCtwo}{$25^*_2$}
\psfrag{O}{$32$}
\psfrag{OCone}{$32_1$}
\psfrag{OCtwo}{$32_2$}
\psfrag{P}{$22$}
\psfrag{PCone}{$22_1$}
\psfrag{Q}{$25$}
\psfrag{R}{$36$}
\psfrag{Z}{$64$}
\psfrag{o24}{24} \psfrag{o12}{12} \psfrag{o2}{2} \psfrag{o4}{4} \psfrag{o6}{6} \psfrag{o1}{1}
\psfrag{o24}{} \psfrag{o12}{} \psfrag{o2}{} \psfrag{o4}{} \psfrag{o6}{} \psfrag{o1}{}
\includegraphics[width = \factor\linewidth]{P049000}
\includegraphics[width = \factor\linewidth]{P073300}
\includegraphics[width = \factor\linewidth]{P181030}
\includegraphics[width = \factor\linewidth]{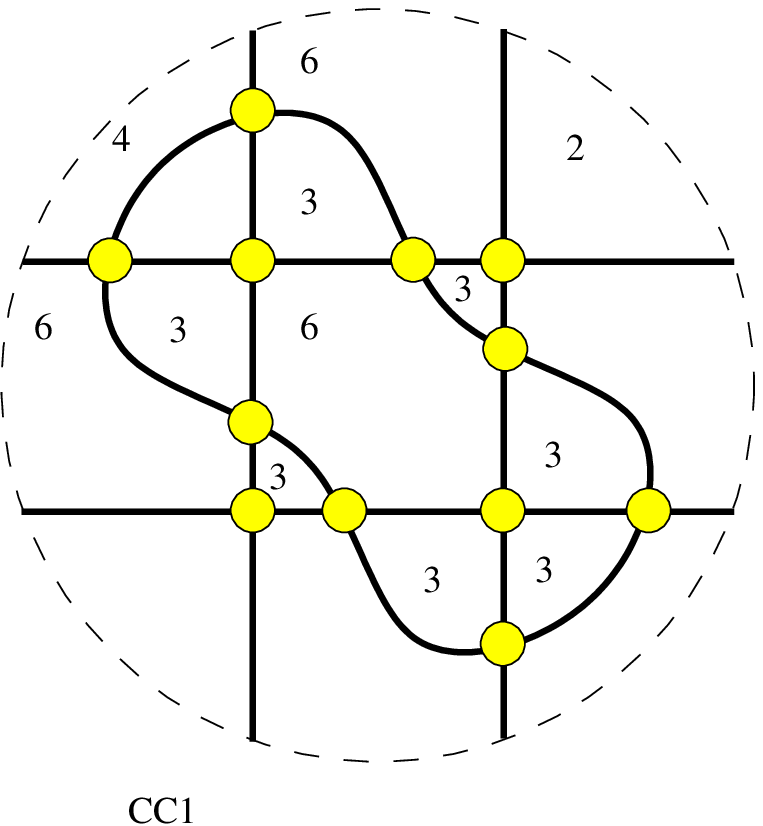}
\includegraphics[width = \factor\linewidth]{P370003}
\includegraphics[width = \factor\linewidth]{P154300}
\includegraphics[width = \factor\linewidth]{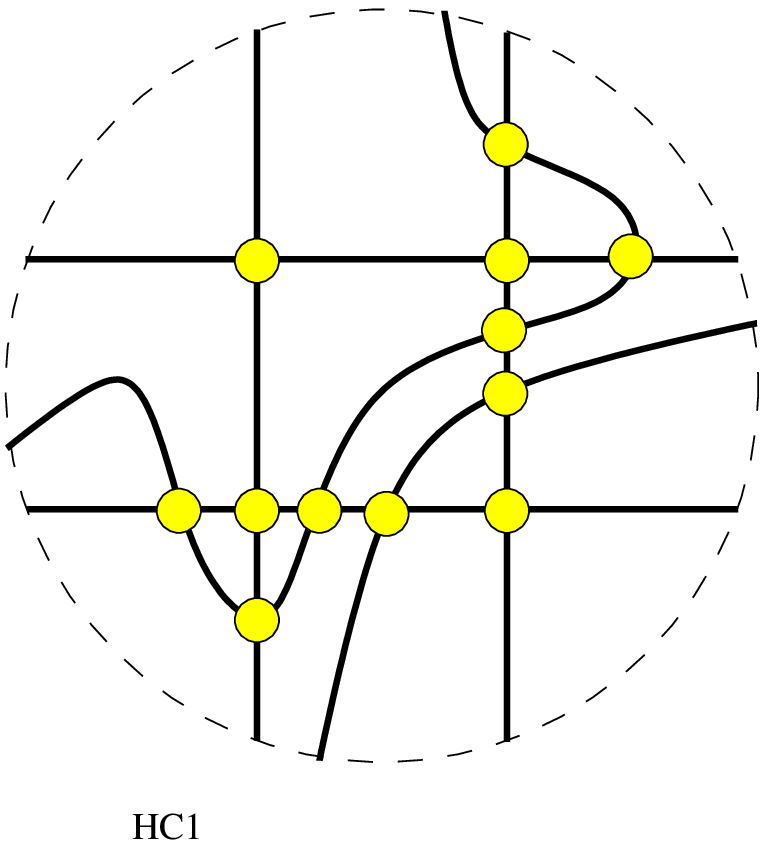}
\includegraphics[width = \factor\linewidth]{P433021}
\includegraphics[width = \factor\linewidth]{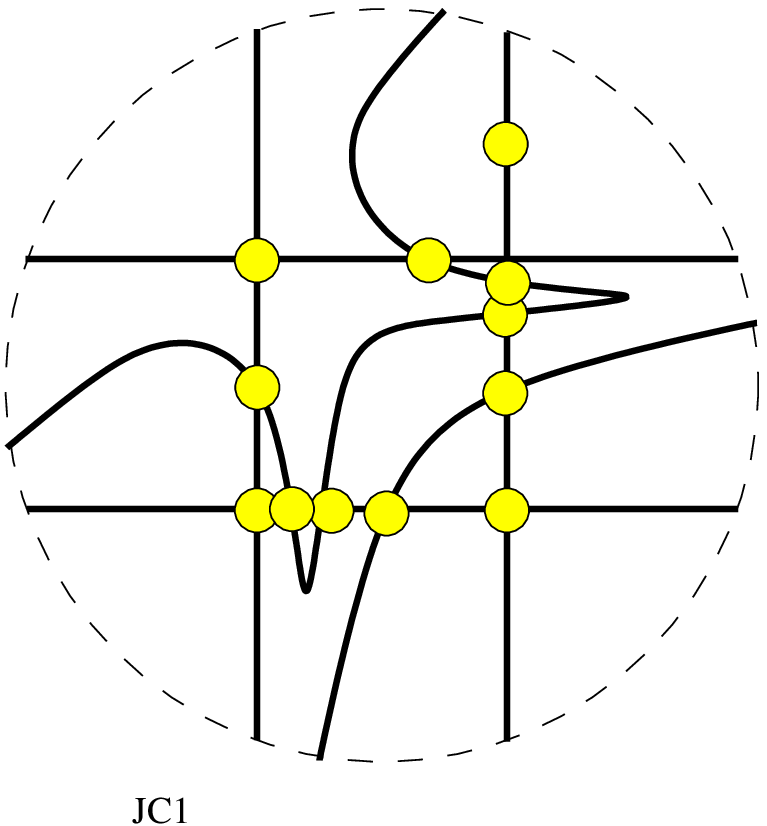}
\includegraphics[width = \factor\linewidth]{P228010}
\includegraphics[width = \factor\linewidth]{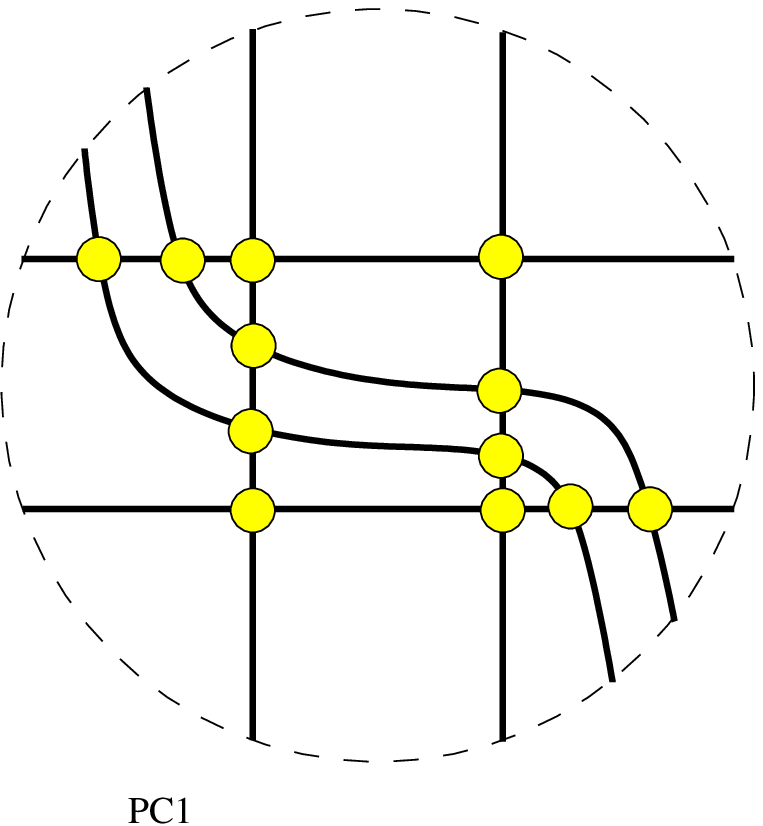}
\includegraphics[width = \factor\linewidth]{P333310}
\includegraphics[width = \factor\linewidth]{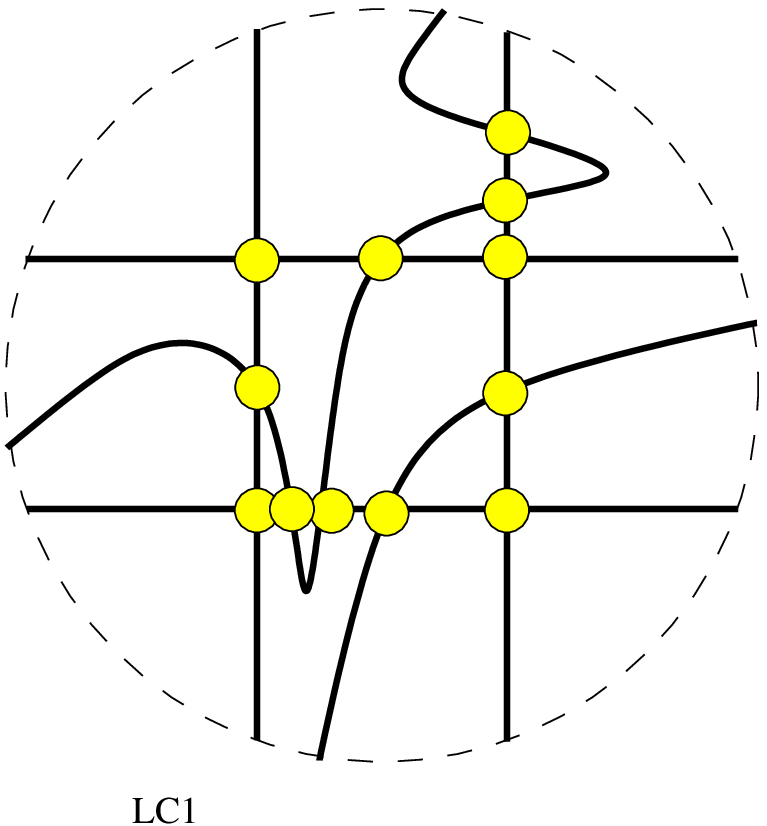}
\includegraphics[width = \factor\linewidth]{P326020}
\includegraphics[width = \factor\linewidth]{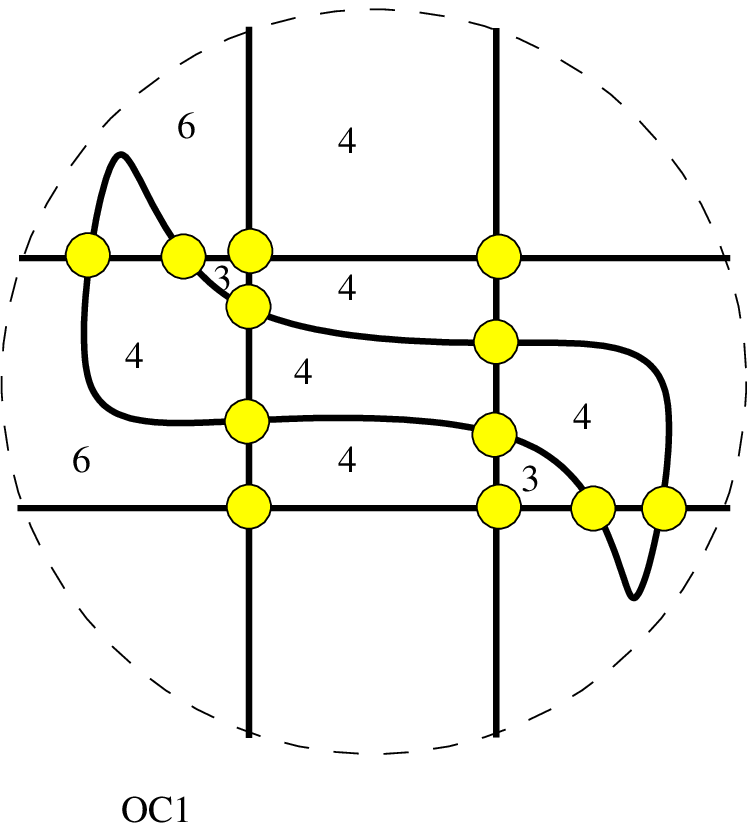}
\includegraphics[width = \factor\linewidth]{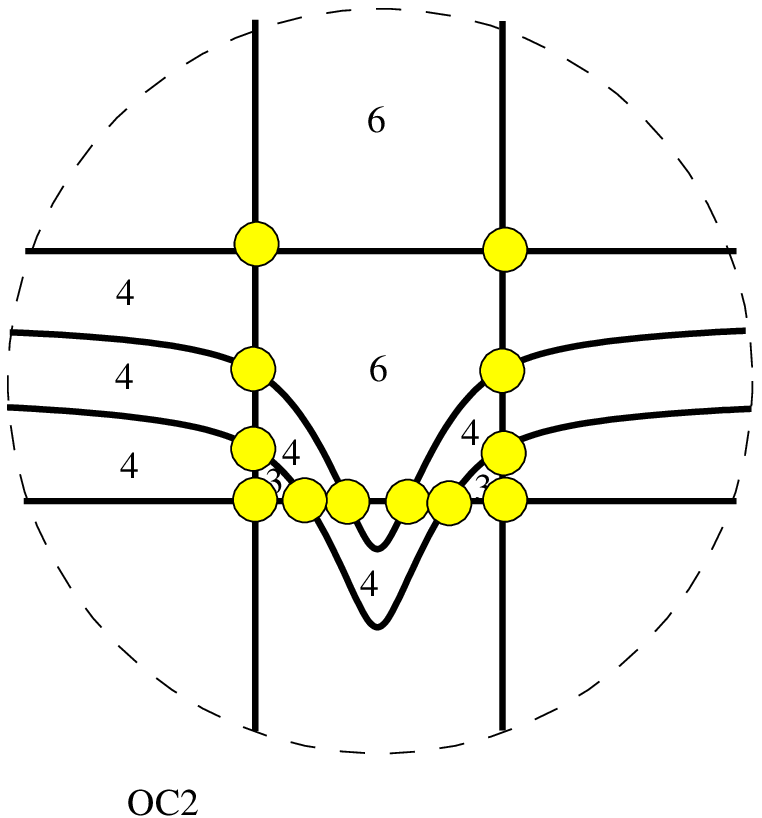}
\includegraphics[width = \factor\linewidth]{P252310}
\includegraphics[width = \factor\linewidth]{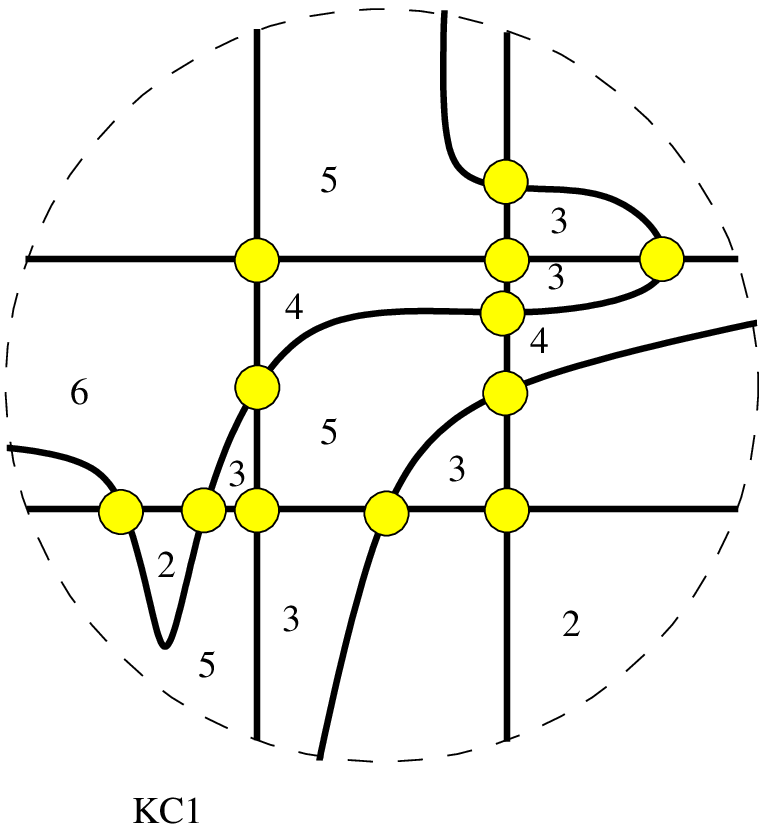}
\includegraphics[width = \factor\linewidth]{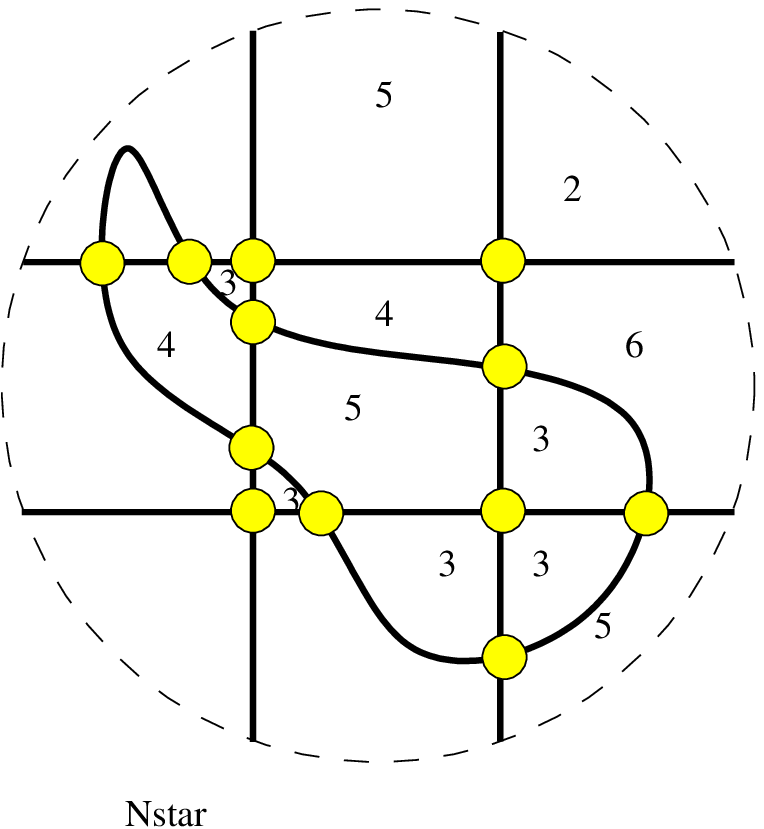}
\includegraphics[width = \factor\linewidth]{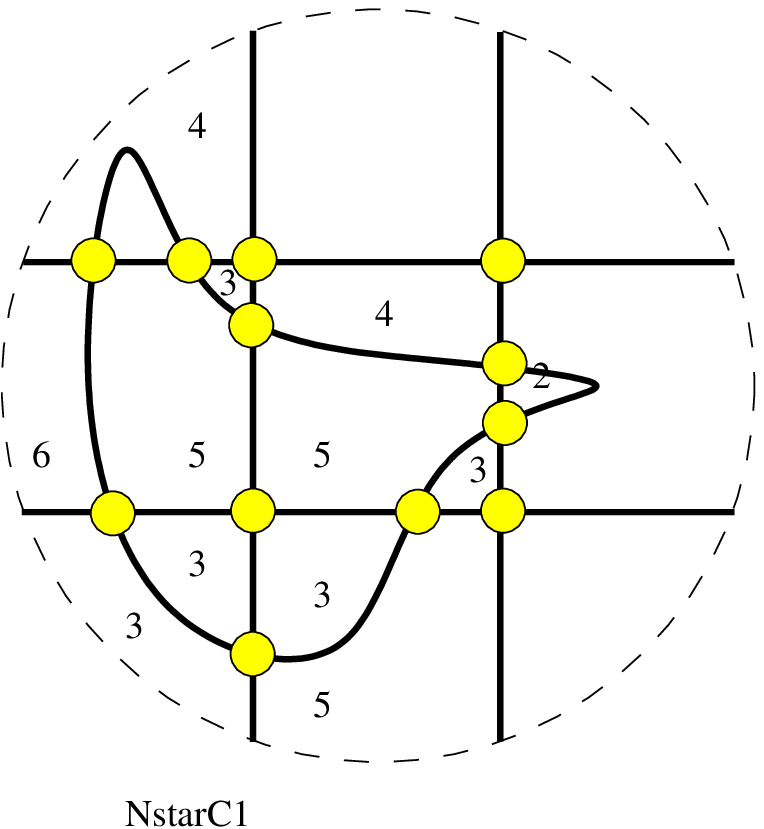}
\includegraphics[width = \factor\linewidth]{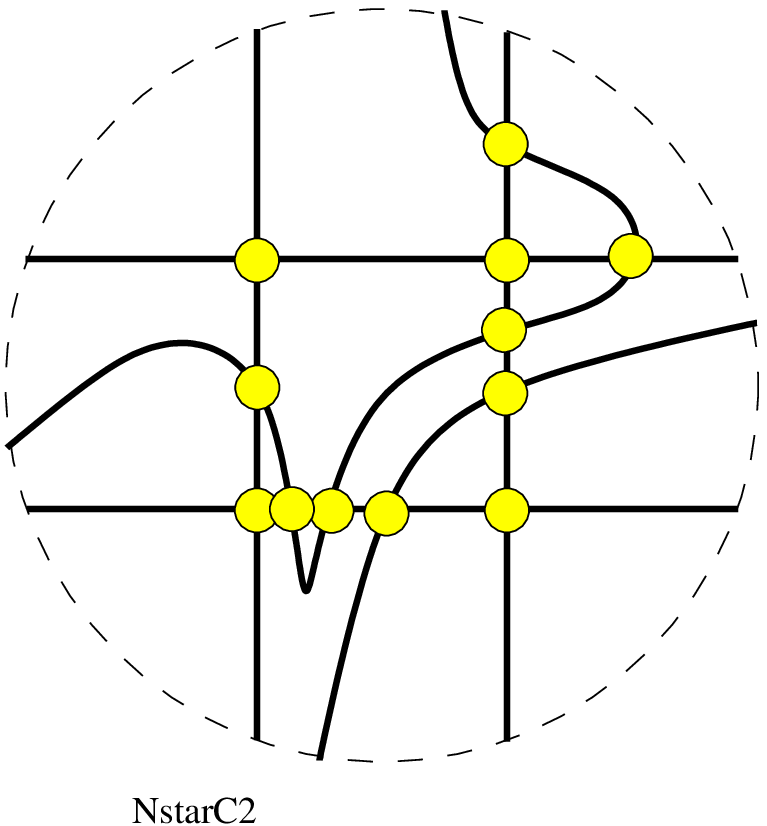}
\includegraphics[width = \factor\linewidth]{P360040}
\includegraphics[width = \factor\linewidth]{P6400003}
\caption{Representatives of the twenty-three  isomorphism classes of simple one-extensions of an arrangement of two double pseudolines with the curved double pseudoline in  the role of the one-extension element. 
\label{nodesoneextensionthr}}
\end{figure}

\begin{theorem} \label{main}
Let $\Gamma'$ and $\Gamma''$ be two one-extensions of an arrangement of double pseudolines $\Gamma$.  
Then $\Gamma'$ and $\Gamma''$ are homotopic via a finite sequence of mutations followed by an isotopy
during which the only moving curves are the one-extension elements of $\Gamma'$ and~$\Gamma''$. 
\end{theorem}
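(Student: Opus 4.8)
The plan is to reduce each of $\Gamma'$ and $\Gamma''$, by mutations that move only the extension element, to a one-extension whose added curve is \emph{thin}, and then to connect the two resulting thin extensions by sweeping the thin curve across the fixed background arrangement $\Gamma$. Throughout, $\Gamma$ stays fixed, so every mutation automatically moves only the extension element, as required.

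First I would make the extension elements thin. Write $\gamma'$ and $\gamma''$ for the one-extension elements of $\Gamma'$ and $\Gamma''$. After a preliminary perturbation of $\gamma'$ (a mutation of $\gamma'$ alone) ensuring that the vertices carried by $\gamma'$ are simple, Lemma~\ref{evpl} provides a fan of $\Gamma'$ supported by $\gamma'$ and contained in $\MS{\gamma'}$ as long as $\MS{\gamma'}$ still contains a vertex of $\Gamma$. Mutating $\gamma'$ across such a fan pushes the apex of the fan, a vertex of $\Gamma$, out of $\MS{\gamma'}$ and into $\DS{\gamma'}$, strictly lowering the number of vertices of $\Gamma$ inside $\MS{\gamma'}$. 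Iterating drives this number to zero, so $\gamma'$ becomes thin; the same pumping makes $\gamma''$ thin. Hence I may assume both extension elements are thin, and the task reduces to connecting two thin one-extensions of $\Gamma$.

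A thin extension element is a narrow band around a core pseudoline that meets each double pseudoline of $\Gamma$ transversally in exactly two points, so the problem becomes the connectedness, under mutations of the added curve alone, of the space of core pseudolines completing $\Gamma$ to a one-extension. To analyse the realizable part of this space I would pass to the geometric side through the \GRT: up to a homeomorphism of the underlying plane, which is an isomorphism and therefore harmless, $\Gamma$ is the dual arrangement of pairwise disjoint convex bodies $\delta_1,\dots,\delta_n$, and a thin extension realizable in this picture is precisely the dual of a point $p$, the condition of two crossings with each $\gamma_i$ being equivalent to $p\notin\delta_i$. The complement $\pp\smallsetminus\bigcup_i\delta_i$ is connected; joining two such points by a generic path, each passage of $p$ across a common tangent line of a pair $\delta_i,\delta_j$ makes the core sweep through the vertex $\gamma_i\cap\gamma_j$ and triggers a single mutation of the extension element. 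This connects all thin extensions that are realizable over the chosen realization.

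The hard part is that the cores coming out of the first step need not be duals of points over this fixed realization: a thin one-extension is an abstract single-element extension, and such an extension need not be stretchable. The crux is therefore to show that \emph{every} thin extension can be brought, by mutations moving only its core, to a realizable one, i.e.\ that the entire space of single-curve extensions of $\Gamma$ is connected and not merely its realizable part. Here I would again exploit Lemma~\ref{evpl}: the guarantee of \emph{two} fans supported by the moving curve, together with the freedom to pump while keeping one of the points of the curve fixed, furnishes the elementary moves of a sweep of the core across $\Gamma$ and keeps the sweep from stalling, just as the ordinary \PL\ drives the reduction of an arbitrary arrangement to a thin, essentially pseudoline, one. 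Combining the thinning reduction, this sweep, and the point-motion connectivity of the realizable extensions, and appending a final isotopy, produces a sequence of mutations from $\Gamma'$ to $\Gamma''$ in which only the extension elements move.
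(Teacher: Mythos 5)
Your thinning step and your use of continuous motions on the realizable side both appear in the paper, but the heart of the theorem is exactly the point you flag and then leave open: connecting two \emph{abstract} thin extensions by mutations of the added curve alone. Your proposed bridge --- that Lemma~\ref{evpl} ``furnishes the elementary moves of a sweep of the core and keeps it from stalling'' --- does not work as stated, because Lemma~\ref{evpl} produces fans supported by a \emph{double pseudoline} of the arrangement inside its own M\"obius strip; it says nothing about fans supported by the added core pseudoline $\gpstar$, which is the only curve you are allowed to move. Moreover, your plan treats the two extensions one at a time (bring each to a realizable position over a fixed realization of $\Gamma$), but realizability of an abstract one-extension over a \emph{fixed} realization is precisely the stretchability-type statement one cannot assume: the {\GRT} realizes each mixed arrangement $\Gamma\cup\{\gpstar\}$ over \emph{some} projective geometry, with no guarantee that the two extensions are realized over the same geometry with the same convex bodies, so the point-motion argument only connects the extensions realizable over one chosen realization, and the gap between the abstract and the realizable parts of the extension space remains exactly where you acknowledge it.

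The paper closes this gap with a device absent from your proposal: it first normalizes \emph{both} cores simultaneously --- $\gp$ and $\gpp$ are made thin and \emph{anchored}, touching a common double pseudoline $\gamma\in\Gamma$ at the same $2$-cell $\sigma'$ and coinciding inside the disk $\DS{\gamma}$, with $\MS{\gp}$ and $\MS{\gpp}$ tubular neighbourhoods of core pseudolines $\gpstar$ and $\gppstar$ --- and then introduces an auxiliary copy $\tau$ of $\gamma$ which is shrunk step by step. The two-fan conclusion of Lemma~\ref{evpl} is applied to $\tau$, not to the cores: among the at least two fans supported by $\tau$ inside $\MS{\tau}$, one avoids the anchor cell $\sigma'$; the cores $\gpstar$ and $\gppstar$ are first cleared off that fan by mutations (legitimate moves of the extension elements only, possible since each core crosses each \gcurve\ at most once), and then $\tau$ sweeps across the fan without touching $\sigma'$. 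Iterating traps both cores in the vertex-free strip $\MS{\tau}$, where they become isotopic rel their common part in $\DS{\gamma}$. Without this simultaneous anchoring and auxiliary sweep (or an actual proof of your claim that every thin extension can be mutated to a realizable one), your argument establishes connectivity only of the realizable thin extensions, not of the whole one-extension space.
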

\begin{proof}

Let $\gamma$ be a double pseudoline of $\Gamma$ and  let $\gp$ and $\gpp$  be the one-extension elements of $\Gamma'$ and $\Gamma''$, respectively.
Using continuous motions---thanks to the {\it \GRT}---one can easily reduce the analysis to the following case
\begin{enumerate}
\item $\gp$ is a {\it thin} double pseudoline in $\Gamma'$, i.e., a  double pseudoline whose M{\"o}bius strip is free of vertices of $\Gamma$;
\item $\gamma$ and $\gp$ are {\it touching} at $\sigma'$: by this we mean that $\sigma'$ is one the two 2-cells of size two of the subarrangement $\{\gamma,\gp\}$ 
and that $\sigma'$ is also a 2-cell of the whole arrangement $\Gamma'$;
\item similarly $\gpp$ is a thin double pseudoline in $\Gamma''$, and $\gamma$ and $\gp$ are touching at $\sigma''$;
\item $\gp$ and $\gpp$ coincide in the disk $\DS{\gamma}$, and $\sigma' = \sigma''$;
\item $\MS{\gp}$ is a tubular neighbourhood of a pseudoline $\gpstar$  with the property that $\gpstar$ intersects any double pseudoline of $\Gamma$ 
in exactly two points, the intersection points being transversal;
similarly $\MS{\gpp}$ is a tubular neighbourhood of a pseudoline $\gppstar$  with the property that $\gppstar$ intersects any double pseudoline of $\Gamma$ 
in exactly two points, the intersection points being transversal;
\item $\gpstar$ and $\gppstar$  coincide in $\DS{\gamma}$, and intersect finitely many in $\MS{\gamma}$, the intersection points being transversal;
\item $\MS{\gp} \cup \MS{\gpp}$ is a tubular neighbourhood of $\gpstar \cup \gppstar$.
\end{enumerate}
Let $\Gamma'_* = \Gamma \cup \{\gpstar\}$ and, similarly, let $\Gamma''_* = \Gamma \cup \{\gppstar\}$. It should be clear to the reader that the proof of our theorem 
boils down to show that the mixed arrangements $\Gamma'_*$ and $\Gamma''_* $ are homotopic via a finite sequence of mutations during which the only moving curves 
are the pseudolines $\gpstar$ and $\gppstar$. This is precisely here that we are going to use Lemma~\ref{evpl}.  But before using this lemma  we define a \gcurve\ as a 
 connected component of the trace on $\MS{\gamma}$ of a double pseudoline of $\Gamma \setminus\{\gamma\}$, 
we observe that the pseudoline $\gpstar$ intersects a \gcurve\ in at most one point (necessarily a transversal intersection point), and similarly that 
the pseudoline $\gppstar$ intersects a \gcurve\ in at most one point.  Let now $\tau$ be a copy of $\gamma$ and let 
$\tauG$ be the symmetric difference between $\Gamma$ and $\{\gamma,\tau\}$. If $\tau$ is thin in $\tauG$ we are done 
modulo an isotopy. Otherwise Lemma~\ref{evpl} asserts that there exists a fan $\Delta$ of the arrangement $\tauG$ supported by $\tau$, included in $\MS{\tau}$, and 
missing the 2-cell $\sigma'$. Modulo a finite sequence of mutations in $\tauGp$ or in $\tauGpp$ or in both $\tauGp$ and in $\tauGpp$ during which the only moving curves are 
$\gpstar$ or $\gppstar$ or both $\gpstar$ and $\gppstar$ one can assume that $\Delta$ misses $\gpstar$ and $\gppstar$.  We then perform a mutation of $\Delta$ in $\tauG$ 
with $\tau$ in the role of the moving curve---during this process we do not touch at $\sigma'$. By repeated application of this process we arrive at the situation 
where $\tau$ is thin in $\tauG$ and where $\gpstar$  and $\gppstar$ are included in $\MS{\tau}$, excepted their common part in $\DS{\gamma}$---which is included in $\DS{\tau}$.  
At this point we are done modulo an isotopy. 
\end{proof}

\begin{figure}[!htb]
\centering
\footnotesize
\psfrag{A}{$04$}
\psfrag{B}{$07$}
\psfrag{C}{$18$}
\psfrag{CCone}{$18_1$}
\psfrag{D}{$25$} 
\psfrag{F}{$07$} 
\psfrag{G}{$37$}
\psfrag{H}{$15$}
\psfrag{HCone}{$15_1$}
\psfrag{J}{$43$}
\psfrag{JCone}{$43_1$}
\psfrag{K}{$25$}
\psfrag{KCone}{$25_1$}
\psfrag{L}{$33$}
\psfrag{LCone}{$33_1$}
\psfrag{M}{$32$}
\psfrag{N}{$25$}
\psfrag{Nstar}{$25^*$}
\psfrag{NstarCone}{$25^*_1$}
\psfrag{NstarCtwo}{$25^*_2$}
\psfrag{O}{$32$}
\psfrag{OCone}{$32_1$}
\psfrag{OCtwo}{$32_2$}
\psfrag{P}{$22$}
\psfrag{PCone}{$22_1$}
\psfrag{Q}{$25$}
\psfrag{R}{$36$}
\psfrag{Z}{$64$}
\includegraphics[width=.9999585750000000090000750\linewidth]{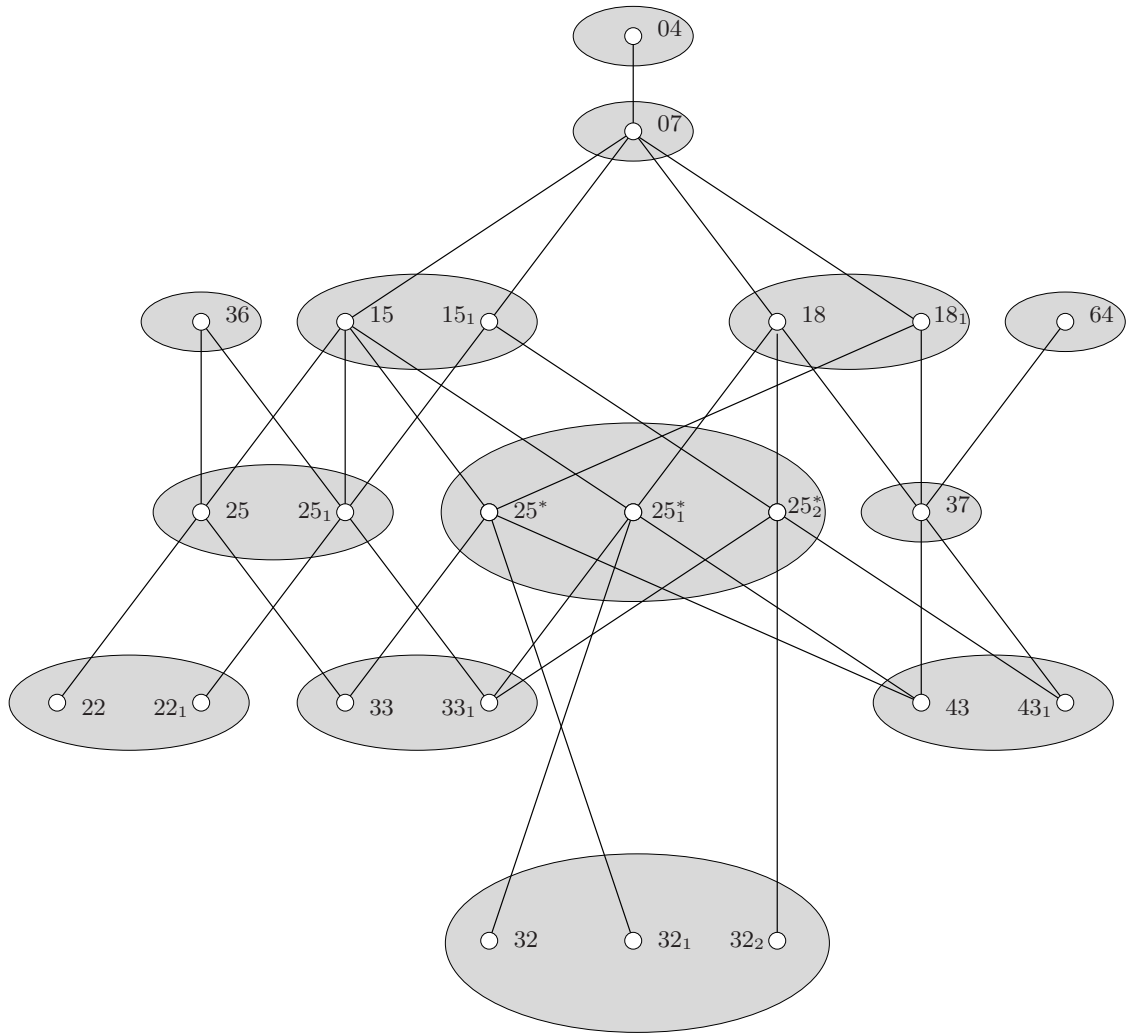}
\caption{The graph of mutations on the simple one-extensions of an arrangement of two  double pseudolines.}
 \label{fig:mutationter}
\end{figure}

In other words Theorem~\ref{main} asserts that the space of one-extensions of an arrangement of double pseudolines is connected under mutations.
Figure~\ref{fig:mutationter} depicts the graph of mutations  on simple one-extensions of an arrangement of two double pseudolines.
Similarly one can define a $k$-extension of an arrangement of $n$ double pseudolines $\Gamma$ as an
arrangement of $n+k$ double pseudolines $\Gamma'$ of which $\Gamma$ is a
subarrangement and show that the space of $k$-extensions of an arrangement of double pseudolines is connected under mutations. 
Finally we mention that similar results hold for marked arrangements and their $k$-extensions.

\section{The incremental algorithm}\label{sec:incremental}

\subsection{Description}
A \emph{pointed} arrangement is an arrangement with a distinguished double pseudoline. 
We always use the notation $A^{\bullet}$ for a pointed arrangement and $A$ for its non-pointed version (and similarly for sets of pointed arrangements). 
We also use the notation $\operatorname{Sub}(A)$ for the set of subarrangements of an arrangement $A$.

Let $\mathcal{A}_{n}$ denote the set of isomorphism classes of arrangements of $n$ double pseudolines, and $p_{n}$ be its cardinality.
Our algorithm enumerates $\mathcal{A}_{n}$ from an enumeration $a_1,a_2,\ldots, a_{p_{n-1}}$ of 
$\mathcal{A}_{n-1}$, by mutating an added distinguished double pseudoline to the~$a_i$.  

\begin{algorithm}
\caption{Incremental enumeration}
\begin{algorithmic}

\REQUIRE $\mathcal{A}_{n-1}=\{a_1,\ldots,a_{\NPA{n-1}}\}$.
\ENSURE  $\mathcal{A}_{n}$.

\medskip
\FOR{$i$ from $1$ to $\NPA{n-1}$}
	\STATE $A^{\bullet}\leftarrow\;$add a pointed double pseudoline to $a_i$.
	\IF{$\operatorname{Sub}(A) \cap \{a_1,\ldots, a_{i-1}\}=\emptyset$}
		\STATE write $A$.
	\ENDIF
	\STATE $Q^{\bullet}\leftarrow[A^{\bullet}]$. $S^{\bullet}\leftarrow\{A^{\bullet}\}$.
	\WHILE{$Q^{\bullet}\ne\emptyset$}
		\STATE $A^{\bullet}\leftarrow \mathrm{pop}\;Q^{\bullet}$.
		\STATE $T\leftarrow\;$list the fans of $A^{\bullet}$ supported by its pointed double pseudoline.
		\FOR{$t\in T$}
			\STATE $B^{\bullet}\leftarrow\;$mutate the fan $t$ in $A^{\bullet}$. 
			\IF{$B^{\bullet}\notin S^{\bullet}$} 
				\IF{$\operatorname{Sub}(B)\cap \{a_1,\ldots, a_{i-1}\}=\emptyset$ and ${B}\notin {S}$}
					\STATE write B.
				\ENDIF
				\STATE $ \mathrm{push}(B^{\bullet},Q^{\bullet})$. $S^{\bullet}\leftarrow S^{\bullet}\cup\{B^{\bullet}\}$.
			\ENDIF	
		\ENDFOR
	\ENDWHILE
\ENDFOR
\end{algorithmic}
\end{algorithm}


For each $i\in\{1,\ldots,\NPA{n-1}\}$, the algorithm enumerates the subset $S_i^{\bullet}$ of arrangements of $\mathcal{A}_{n}^{\bullet}$ containing $a_i$ as a subarrangement, 
by mutations of a distinguished added double pseudoline. From the set $S_i$, it selects the subset $R_i$ of arrangements with no subarrangements in $\{a_1,\ldots,a_{i-1}\}$.
In other words, $R_i$ is the subset of arrangements of $\mathcal{A}_{n}$ whose first subarrangement, in the sequence  $a_1,\ldots,a_{\NPA{n-1}}$, is $a_i$. 
Thus, $\mathcal{A}_{n}$ is the disjoint union of the~$R_i$.

An alternative approach{}\footnote{We thank Luc Habert for this suggestion.}
 for counting arrangements is to enumerate the subsets $S_i^{\bullet}$ and to compute, for each arrangement $A^{\bullet}$ of $S_i^{\bullet}$, the number $\sigma(A^{\bullet})$ of double pseudolines $\alpha$ of $A$ such that $A$ pointed at $\alpha$ is isomorphic to $A^{\bullet}$. Then
$$p_{n}=\frac{1}{n}\sum_{i=1}^{\NPA{n-1}} \sum_{A^{\bullet}\in S_i^{\bullet}} \sigma(A^{\bullet}).$$
The main advantage of this version is to reduce the number of accesses to the data base of arrangements of order $n-1$;
 an advantage which turns to be crucial in case the data base does not fit in main memory. 
However, it only counts $\NPA{n}$ and can not provide a data base for $\mathcal{A}_{n}$.


\subsection{Adding a double pseudoline}
One of the important steps of the incremental method is to add a double pseudoline to an initial arrangement. 
Our method uses three steps (see  Fig.~\ref{fig:addADPL}):
\begin{enumerate}
\item \textsl{duplicate a double pseudoline}: we choose one arbitrary double pseudoline $\gamma$, duplicate it, drawing a new double pseudoline $\gamma'$ 
completely included in the M{\"o}bius strip $\MS{\gamma}$ and we denote $R$ any rectangle delimited by $\gamma$ and $\gamma'$. 

\item \textsl{flatten}: we pump the double pseudoline $\gamma'$ 
such that no vertex of the arrangement lies in the M{\"o}bius strip $\MS{\gamma'}$. During this process, we do not touch the rectangle~$R$.

\item \textsl{add four crossings}: we replace the rectangle $R$ by four crossings between $\gamma$ and~$\gamma'$.  \end{enumerate}

\begin{figure}[!htb]
  \centering
  \includegraphics[width=0.75\linewidth]{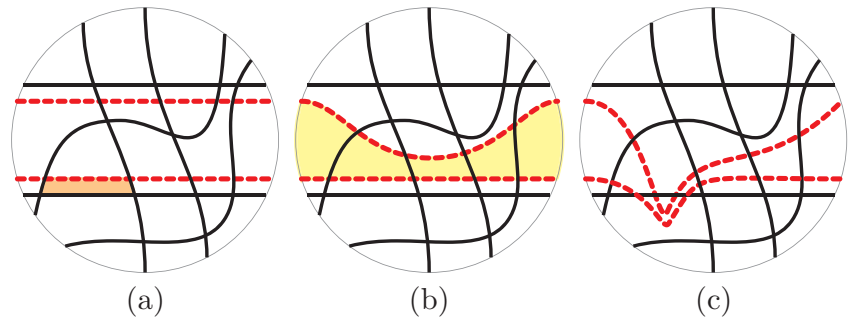}
  \caption{Three steps to insert a double pseudoline in a double pseudoline arrangement: duplicate a double pseudoline (a), flatten it (b) and add four crossings~(c).}
  \label{fig:addADPL}
\end{figure}

If we think of our double pseudoline arrangement as the dual of a configuration of convex bodies, this method corresponds to: (1) choosing one convex $C$ and drawing a new convex $C'$ inside $C$; (2) flattening the convex $C'$ until it becomes almost a single point, maintaining it almost in contact with the boundary of $C$; and (3) moving $C'$ outside $C$.

\subsection{Encoding an arrangement}
In order to manipulate arrangements, one can encode it in several different ways. 
We used two different encodings (one for an easy manipulation of it and one for short storage):
\begin{enumerate}
\item \textsl{flag representation}: 
a \emph{flag} of an arrangement $\Gamma$ is a triple $(v,e,f)$ consisting of a vertex $v$, an edge $e$ and a face $f$ of $\Gamma$, such that $v\in e\subset f$. 
The three involutions $\sigma_0,\sigma_1,\sigma_2$, that change the vertex, the edge and the face of a flag (cf. Fig.~\ref{fig:sigma}), completely determine $\Gamma$ 
(see, e.g.,~\cite{gt-tgt-01}).
This representation is convenient to perform all the necessary elementary operations we need to perform on arrangements (such as mutations, extensions, etc.).
\begin{figure}[!htb]
  \centering
  \includegraphics[width=.5\linewidth]{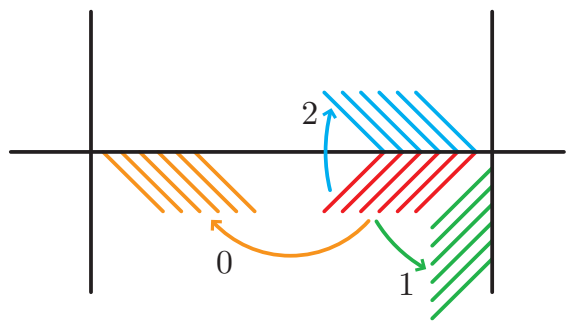}
  \caption{The three involutions $\sigma_0$, $\sigma_1$ and $\sigma_2$.}
  \label{fig:sigma}
\end{figure}

\item \textsl{encoding}: In the enumeration process, once we have finished to manipulate an arrangement $\Gamma$, we still have to remenber
it and to test whether the new arrangements we find afterwards are or not isomorphic to $\Gamma$. For this, we compute another representation of $\Gamma$, which is shorter than the flag representation, and which allows a quick isomorphism test.
 
We first associate to each flag $\phi=(v,e,f)$ of the arrangement $\Gamma$ a word $w_\phi$ constructed as follows. 
Let $\gamma_1$ be the double pseudoline containing $e$, and, for all $2\le p\le n$, let $\gamma_p$ be the $p$th  double pseudoline crossed by $\gamma_1$ on a walk starting at $\phi$ and oriented by $\sigma_0\sigma_1\sigma_2\sigma_1$.
This walk also defines a starting flag $\phi_i$ for each $\gamma_i$.
We walk successively on $\gamma_1,\ldots,\gamma_{n}$, starting from $\phi_i$ and in the direction given by $\sigma_0\sigma_1\sigma_2\sigma_1$, and index the vertices by $1,2,\ldots,V$ in the order of appearance. 
For all $i$, let $w_i$ denote the word formed by reading the indices of the vertices of $\gamma_i$ starting from $\phi_i$. 
The word $w_\phi$ is the concatenation of $w_1,w_2,\ldots,w_{n}$.
Finally, we associate to the arrangement $\Gamma$ the lexicographically smallest word among the $w_\phi$ where $\phi$ ranges over all the flags of $\Gamma$.
\end{enumerate}

\section{Results}\label{sec:results}
We have implemented this algorithm in the \texttt{C++}
 programming language. The documentation (as well as the source code) of this implementation is available 
upon request to the authors. 

\subsection{Numbers of arrangements and numbers of chirotopes}
This implementation provided us with the following table for the values of 
$\NSPA{n}$, $\NSPPA{n}$, $\NPA{n}$, $\NPPA{n}$,
$\NSOMPA{n}$, $\NOMPA{n}$,
$\NSPCh{n}$, $\NPCh{n}$, $\NSMA{n}$, $\NMA{n}$, 
$\NSMCh{n}$, and $\NMCh{n}$. 

\begin{table}[!ht]
$$
\begin{tabular}{c|cccrrr}
$n$  & $0$ & $1$ & $2$ & $3$ & $4$ & $5$  \\
\hline
$\NSPA{n}$ & $1$ & $1$ & $1$ & $13$ & $6\,570$ & $181\,403\,533$\\
$\NSPPA{n}$ & $0$ & $0$ & $0$ & $1$ & $615$ & \nco\\
$\NPA{n}$ & $1$ & $1$ & $1$ & $46$ & $153\,528$  & \nco\\
$\NPPA{n}$ & $0$ & $0$ & $0$ & $5$ & $18\, 648$  & \nco\\
$\NSOMPA{n}$ & $0$ & $1$ & $1$ & $67$ & $355\, 153$ & \nco\\
$\NOMPA{n}$ & $0$ & $1$ & $1$ & $398$ & \nco & \nco\\
$\NSPCh{n}$ & $1$ & $1$ & $1$ & $214$ & $2\,415\,112$ & \tba\\
$\NPCh{n}$ & $1$ & $1$ & $1$ & $1\,086$ & $58\,266\,120 $ & \tba\\
\hline
\hline
$\NSMA{n}$ & $1$ & $1$ & $1$ & $16$ & $11\,502$ & $238\, 834\, 187$\\
$\NSAMA{n}$ & $1$ & $1$ & $1$ & $22$ & $22\, 620$ & \nco\\
$\NMA{n}$ & $1$ & $1$ & $1$ & $59$ & $245\, 351$ & \nco\\
$\NAMA{n}$ & $1$ & $1$ & $1$ & $92$ & $488\, 303$ & \nco\\
$\NSMCh{n}$ & $1$ & $1$ & $1$ & $118$ & $541\, 820$ & \tba \\
$\NMCh{n}$ & $1$ & $1$ & $1$ & $531$ & $11\, 715\, 138$ & \nco\\
\end{tabular}
$$
\caption{Numbers of arrangements and numbers of chirotopes.}
\label{maincount}
\end{table}
Apart the trivial cases of $n=0,1,2$ only the values of 
$\NSPA{3}, \NSPPA{3}, \NSMA{3}, \NSAMA{3}, \NSMA{4}, \NSAMA{4}$ and $\NSMCh{3}$ were known previously.
The comparaison of the data of the 
lines $\NMA{n}$ and $\NAMA{n}$ confirms the intuition that very few M\"obius arrangements are isotopic to their mirror image. 

\subsection{Mixed arrangements}
In our presentation, we concentrated on the enumeration of double pseudoline arrangements or, equivalently  by the {\it \GRT},  on the enumeration of dual arrangements of configurations 
of pairwise disjoint convex bodies of projective (or affine) geometries. 
It is also interesting to enumerate dual arrangements of configurations of 
disjoint points and convex bodies. 
This  class is captured  by the notion of {\it mixed arrangements} and can be enumerated using the same algorithm.

\begin{figure}[!htb]
\centering
\includegraphics[width=0.75\linewidth]{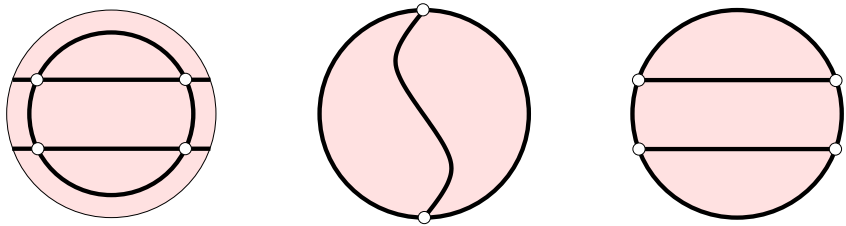}
\caption{\protect The three mixed arrangements of size two.  \label{addp}}
\end{figure}

A {\it mixed arrangement} in the projective plane $\pp$
is a finite set of pseudolines and double pseudolines in $\pp$  such that 
(i) any two pseudolines have a unique intersection point; 
(ii)~a pseudoline and a double pseudoline have exactly two intersection points and cross transversally at these points; and 
(iii) any two double pseudolines have exactly four intersection points, cross transversally at these points, and induce a cell decomposition of $\pp$ (Fig.~\ref{addp}). 
The {\it \GRT} for  mixed arrangements in a projective plane asserts that their class coincides with the class of dual arrangements of 
configurations of disjoint points and convex bodies of projective geometries~\cite{G-hp-adp-06}. 
A {\it mixed arrangement} in the M{\"o}bius strip $\MOB$ is a mixed arrangement in the projective plane $\OPC$  with the property that the intersection of
the topological disks surrounded by the pseudolines  and double pseudolines of the arrangement is nonempty and 
contains the point at infinity. 
The {\it \GRT} for  mixed arrangements in a M{\"o}bius strip asserts that their class coincides with the class of dual arrangements of 
configurations of disjoint points and convex bodies of affine geometries~\cite{G-hp-adp-06}.

Our incremental enumeration algorithm for double pseudoline arrangements 
extends easily to mixed arrangements in the projective plane and in the M{\"o}bius strip.
Our implementation provided us with the following four tables for the numbers of isomorphism classes of mixed arrangements composed of $n$ 
pseudolines and $m$ double pseudolines in the projective plane or in the M{\"o}bius strip. 
The sequences of $p^S_{n,0}$, $q^S_{n,0}$, $p_{n,0}$ and $q_{n,0}$ appear in the database of 
The On-Line Encyclopedia of Integer Sequences{}\footnote{\url{http://www.research.att.com/~njas/sequences/A006248}}    
under the code numbers A006248, A006247, A063800, and A063854 respectively.

\begin{table}[!ht]
$$
\begin{tabular}{c|cccccc}
$p^S_{n,m}$  & $0$ & $1$ & $2$ & $3$ & $4$ & $5$  \\
\hline
$0$ & $1$ & $1$ & $1$ & $13$ & $6\,570$ & $181\,403\,533$\\
$1$  &   $1$  & $1$  & $4$   & $626$    & $4\,822\,394$ \\
$2$  & $1$   & $2$  &  $48$  &  $86\,715$ \\
$3$  & $1$  & $5$ &  $1\,329$ \\
$4$  & $1$  & $25$ & 80\,253  \\
$5$  &  $1$ & 302  \\
6  & 4   &  9\,194 &    &   &       &\\
7  & 11   & 556\,298  &    &   &       &\\
8  & 135   &   &    &   &       &\\
9  & 4\,382   &   &    &   &      &\\
10  & 312\,356   &   &    &   &      &\\
\end{tabular}
$$
\caption{Simple projective mixed arrangements.}
\label{smpacount}
\end{table}

\begin{table}[!htb]
$$
\begin{tabular}{c|cccccc}
$q^S_{n,m}$  & $0$ & $1$ & $2$ & $3$ & $4$ & $5$  \\
\hline
$0$ & $1$ & $1$ & $1$ & $16$ & $11\,502$ & $238\, 834\, 187$\\
$1$  &   $1$  & $1$  & $7$   & $1\,499$    & $9\, 186\,477$ \\
$2$  & $1$   & $3$  &  $140$  &  $245\,222$ \\
$3$  & $1$  & $13$ &  $5\,589$ \\
$4$  & $2$  & $122$ & $416\,569$  \\
$5$  &  $3$ & $2\,445$  \\
6  & $16$   &  $102\,413$ &    &   &       &\\
7  & $135$   & $7\, 862\,130$  &    &   &       &\\
8  & $3\, 315$   &   &    &   &       &\\
9  & $158\,830$   &   &    &   &      &\\
\end{tabular}
$$
\caption{Simple M{\"o}bius mixed arrangements. 
}
\end{table}

\begin{table}[!htb]
$$
\begin{tabular}{c|ccccc}
$p_{n,m}/q_{n,m}$  & $0$ & $1$ & $2$ & $3$ & $4$ \\
\hline
$0$ & $1/1$ & $1/1$ & $1/1$ & $46/59$ & $153\,528/245\,351$ \\
$1$  & $1/1$  & $1/1$  & $9/17$   & $6\,998/15\,649$    &  \\
$2$  & $1/1$   & $3/5$  &  $265/799$  &   \\
$3$  & $1/1$  & $16/45$ &  $18\,532/74\,559$ \\
$4$  & $2/3$  & $159/832$ &  \\
$5$  &  $4/11$ & $4\,671/37\,461$  \\
6  & $17/93$   &  $342\,294$/\nco &    &   &       \\
7  & $143/2\, 121$   &  &    &   &       \\
8  & $4\, 890/122\, 508$   &   &    &   &       \\
9  & $461\,053$/\nco   &   &    &   &      \\
\end{tabular}
$$
\caption{Projective/M{\"o}bius  mixed arrangements.}
\end{table}

\clearpage
\subsection{Automorphism groups} Our implementation  provided us with the following values for the sequences $(k,\GSP{k}{n})$, 
$k \in  \{ u \mid \GSP{u}{n}\neq 0\}$, in the case $n=2,3$, and~$4$
$$(8,1)$$
$$(1,1) (2,5) (4,2) (6,2) (12,1) (24,2)$$
$$(1,6\,042) (2,466) (3,15) (4,26) (6,14) (8,1) (12,3) (16,1) (24,2)$$ 
from which we have deduced that $\NSPCh{2} = 1$, $\NSPCh{3} = 214$ and $\NSPCh{4}= 2\,415\,112.$
The corresponding sequences for not necessarily simple projective arrangements, 
simple M\"obius arrangements, and not necessarily simple M\"obius arrangements are in order

$$(8,1)$$
$$(1,10) (2,20) (4,5) (6,7) (12,1) (24,3)$$
$$(1,150\,042) (2,3\,288) (3,58) (4,84) (6,45) (8,2) (12,5) (16,1) (24,3),$$ 

$$(4,1)$$
$$(1,6) (2,6) (4,2) (6,2)$$
$$(1,11\,088) (2,387) (3,6) (4,12) (6,7) (8,2),$$
and
$$(4,1)$$
$$(1,33) (2,20) (4,3) (6,3)$$
$$(1, 242\,815), (2, 2\,466), (3,16), (4,38), (6,13), (8,3).$$
These data confirm the intuition 
that most of the arrangements have trivial automorphism groups, excepted in the basic case of (projective or M\"obius) 
arrangements of three double pseudolines.  
For example only $528$ of the $6570$ simple projective arrangements of four double pseudolines have a non trivial automorphism  group.
The two simple projective  arrangements of  four double pseudolines with automorphism groups of maximal order (i.e., 24)  
are depicted in the Figure~\ref{fig:extremal}.

\begin{figure}[!htb]
  \centering
  \includegraphics[width=.75\columnwidth]{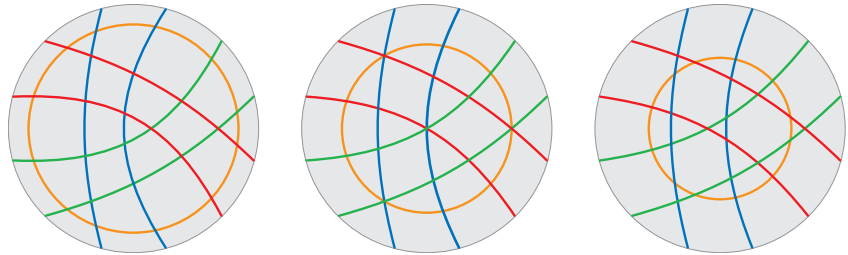}
  \caption{The three projective arrangements of four double pseudolines whose automorphism groups have the maximal size 24.}
  \label{fig:extremal}
\end{figure}

\subsection{Running time}
Let us briefly comment on running time. 
Observe that our algorithm can be parallelized very easily (separating each enumeration of $S_i$ and $R_i$, for $i\in\{1,\ldots,p_{n-1}\}$).
In order to obtain the number of simple projective arrangements of five double pseudolines we used 
four processors of $2$GHz for almost $3$ weeks: to say it differently each iteration of the {\bf for-to} loop
of our enumeration algorithm takes roughly twenty  minutes.  The working space was bounded by 
$\max |S_i^{\bullet}| = 279\, 882$ (times the space of the encoding of a single configuration, i.e., about 80 characters).  
Finding the number of not necessarily simple arrangements of five double pseudolines is doable using  a similar working space 
but much more process time.   

Finally, Fig.~\ref{fig:statistics} shows the evolution of the ratio between the sizes of the sets $R_i$ and $S_i^{\bullet}$ (during the enumeration of simple projective arrangements of five double pseudolines). We have also observed that $\sum |S^{\bullet}_i| / \sum |R_i|\simeq 5$, which confirms that hardly any configurations of five convex bodies have symmetries.

\begin{figure}[!htb]
  \centering
  \includegraphics[width=.75\columnwidth]{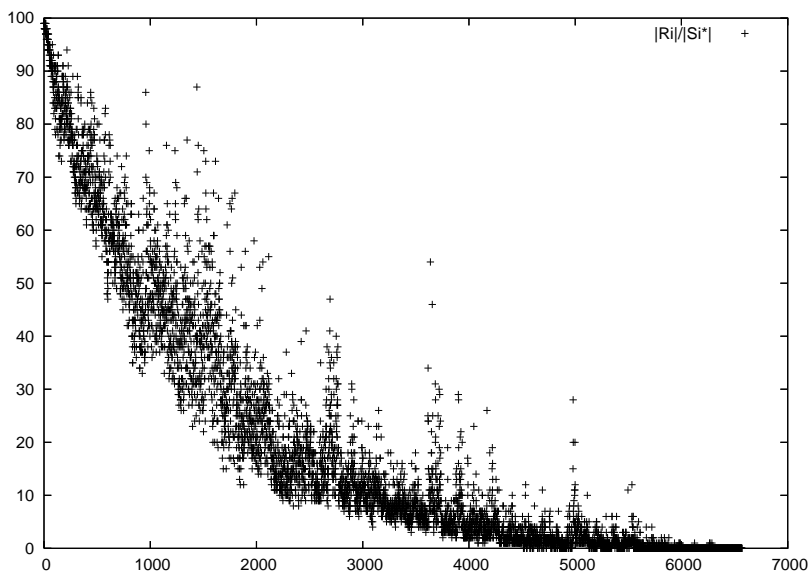}
  \caption{Percentage of new configurations.}
  \label{fig:statistics}
\end{figure}

\subsection{Statistics on mutation graphs.} 
We provide in Table~\ref{degreenewtwo} 
the distribution of the number $x_n$  of simple projective arrangements of four double pseudolines  with $n$ neighboords
in their mutation graph (the number of edges of this mutation graph is therefore 54544 and its average degree is 8). Still in Table~\ref{degreenewtwo}
we provide the distribution of the number $y_n$ of simple projective arrangements of four double pseudolines with $n$ triangular faces. 
In particular we see that there are two simple projective arrangements of four double pseudolines  with the maximum number (16) of triangular faces. 
They are 
depicted in Figure~\ref{fig:triangleextremal}.

\begin{table}[!htb]
$$
\begin{tabular}{c|cc}
$n$ & $x_n$ & $y_n$ 
\\\hline
1 &6 &0
\\\hline
2 &28& 0
\\\hline
3 &69 &9
\\\hline
4 &149& 69
\\\hline
5 &317& 219
\\\hline
6 &655& 566
\\\hline
7 &951& 942
\\\hline
8 &1288& 1336
\\\hline
9 &1228& 1306
\\\hline
10& 959& 1056
\\\hline
11& 587& 649
\\\hline
12& 240& 289
\\\hline
13& 79 &102
\\\hline
14& 13& 22
\\\hline
15& 1 &3
\\\hline
16& 0 &2
\end{tabular}
$$
\caption{Number $x_n$ of vertices of degree $n$ in the mutation graph of simple arrangements of four double pseudolines and
number $y_n$ of simple arrangements of four double pseudolines with $n$ triangular faces. \label{degreenewtwo}}
\end{table}

\begin{figure}[!htb]
  \centering
  \includegraphics[width=.90\columnwidth]{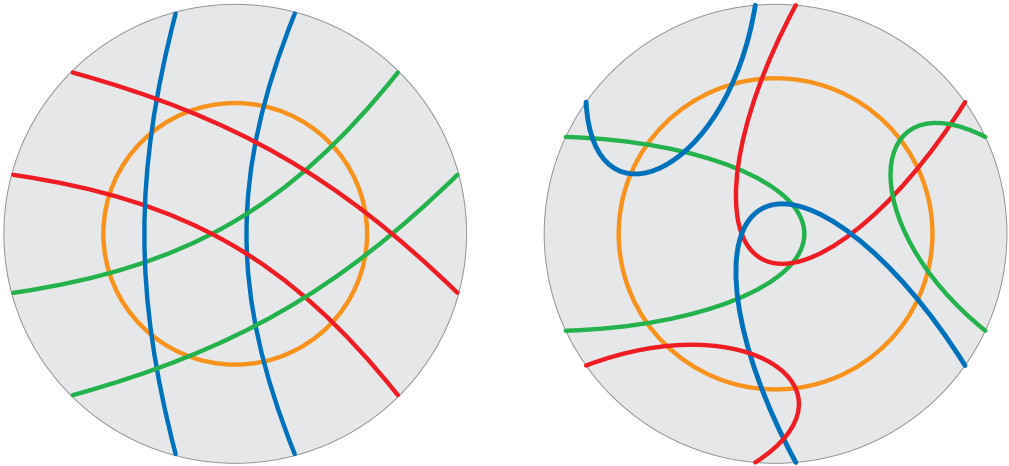}
  \caption{The two  simple projective arrangements of four double pseudolines  with the maximum number (16) of triangular faces.}
  \label{fig:triangleextremal}
\end{figure}

Finally, as a matter of example, we provide in Appendix~\ref{ghmthr}
the graph of D- and C-mutations on the space of arrangements of three double pseudolines. 
\clearpage
\section{Further developments}\label{sec:further}

The following questions and developments (among others) may be treated in a subsequent paper:
\begin{enumerate}
\item \textsl{Developing further implementation}: 
it will be interesting to add to our implementation the two following functionalities related to the {\it \GRT} : 
computing the dual arrangement of a configuration of convex bodies and conversely, computing a configuration of convex bodies 
whose dual arrangement is given. 

\item \textsl{Drawing an arrangement}: we have seen a method to add a pseudoline in an arrangement. Combined with a planar-graph-drawing algorithm, 
this provides an algorithm to draw an arrangement in the unit disk. 
For example,  the number of one-marked arrangements composed of $m$ double pseudolines can be interpreted as the number of 
drawings of the arrangements composed of $m$ double pseudolines with the property 
that the number of crossings between the arrangement and the boundary of the unit disk is minimum, i.e., $2m-2$.
Similarly, the number of mixed arrangements composed of one pseudoline and 
$m$ double pseudolines can be interpreted as the number of drawings of the arrangements composed of $m$ double pseudolines with the property 
that the number of crossings between the arrangement and the boundary of the unit disk is $2m$.

\item \textsl{Generation}: 
the {\it \AT} for double pseudoline arrangements affirms that the huge list of more than one hundred and eighty one millions of 
arrangements of at most five double pseudolines 
is an axiomatization of the class of double pseudoline arrangements. 
Due to its size this list seems hardly computationally workable for the generation of arrangements. 
However it is possible, as explained in~\cite{G-hp-adp-06},
 to reduce this axiomatization to a (relatively) short list of axioms  with a clear combinatorial or geometrical 
interpretation of each axiom in a maner very similar to the known simple axiomatizations of pseudoline arrangements~\cite{blswz-om-99,k-ah-92}.
This open the door to the generation of double pseudoline arrangements with prescribed properties 
using, for example, satisfiability solvers as proposed in~\cite{s-gomus-06,s-nrmvt-08}, 
see also~\cite{bg-gom-00,ff-gomgt-02,aak-eotsp-02,ff-cgspc-03,f-gtarg-01}. In particular it will be interesting to generate 
the arrangements that maximize the number of connected components of the intersection of the M\"obius strips surrounded by the double pseudolines.
If we think of the arrangement as the dual of a configuration of (disjoint) convex bodies this means that we seek for the configurations 
with the maximal number of connected components of transversals.
\item \textsl{Realizability}: it is well-known that certain pseudoline arrangements are not realizable in the standard projective geometry ${\mathcal P}^2(\mathbb{R})$. Inflating pseudolines into thin double pseudolines in such an arrangement give rise to non-realizable double pseudoline arrangements. Are there smaller examples? Are all arrangements of at most five double pseudolines realizable? 

\end{enumerate}
\section*{Acknowledgments}

We thank Luc Habert, \'Eric Colin de Verdi\`ere and Francisco Santos for interesting discussions on the subject (and technical support).





\appendix
\section{A graph of  D- and C-mutations \label{ghmthr}}

{\footnotesize
\begin{table}[p]
\begin{tabular}{c|cccccccccccccccc}
Node code & order autom. & fvector & nbr simple vertices & Neighbors \\ \hline 
\hline
\#12 & 24 &  60 &  0 &  \#6 \#20
\\ \hline
\hline
\#20 & 6 &  61 &  3 &  \#11 \#12 \#30
\\ \hline
\#15 &  6 &  34 &  3 &  \#8 \#22
\\ \hline
\#16 &  2 &  34 &  3 &  \#8 \#9 \#23 \#26
\\ \hline
\#18 &  2 &  43 &  3 &  \#9 \#11 \#25 \#28
\\ \hline
\#6 &  6 &  34 &  3 &  \#3 \#11 \#12
\\ \hline
\hline
\#27 &  2 &  34 &  6 &  \#17 \#35
\\ \hline
\#30 &  4 &  62 &  6 &  \#19 \#20 \#38
\\ \hline
\#8 &  2 &  24 &  6 &  \#4 \#13 \#15 \#16
\\ \hline
\#26 &  2 &  34 &  6 &  \#16 \#17 \#34
\\ \hline
\#28 &  1 &  44 &  6 &  \#17 \#18 \#19 \#36 \#37
\\ \hline
\#22 &  2 &  34 &  6 &  \#13 \#15 \#31
\\ \hline
\#23 &  1 &  33 &  6 &  \#13 \#14 \#16 \#33 \#34
\\ \hline
\#25 &  2 &  41 &  6 &  \#14 \#18 \#36
\\ \hline
\#9 &  1 &  25 &  6 &  \#4 \#5 \#14 \#16 \#17 \#18
\\ \hline
\#11 &  2 &  35 &  6 &  \#5 \#6 \#18 \#19 \#20
\\ \hline
\#3 &  4 &  16 &  6 &  \#1 \#5 \#6
\\ \hline
\#40 &  4 &  34 &  6 &  \#31 \#32
\\ \hline
\hline
\#38 &  6 &  63 &  9 &  \#29 \#30 \#45
\\ \hline
\#19 &  2 &  36 &  9 &  \#10 \#11 \#28 \#29 \#30
\\ \hline
\#36 &  1 &  42 &  9 &  \#24 \#25 \#28 \#44
\\ \hline
\#37 &  2 &  45 &  9 &  \#28 \#29 \#44
\\ \hline
\#17 &  1 &  26 &  9 &  \#9 \#10 \#24 \#26 \#27 \#28
\\ \hline
\#33 &  2 &  33 &  9 &  \#21 \#23 \#42
\\ \hline
\#31 &  2 &  35 &  9 &  \#21 \#22 \#39 \#40
\\ \hline
\#35 &  1 &  33 &  9 &  \#24 \#27 \#43
\\ \hline
\#34 &  1 &  33 &  9 &  \#23 \#24 \#26 \#42
\\ \hline
\#13 &  1 &  24 &  9 &  \#7 \#8 \#21 \#22 \#23
\\ \hline
\#14 &  1 &  24 &  9 &  \#7 \#9 \#23 \#24 \#25
\\ \hline
\#4 &  2 &  15 &  9 &  \#2 \#7 \#8 \#9
\\ \hline
\#5 &  2 &  17 &  9 &  \#2 \#3 \#9 \#10 \#11
\\ \hline
\#1 &  6 &  06 &  9 &  \#0 \#2 \#3
\\ \hline
\#32 &  2 &  24 &  9 &  \#21 \#40 \#41
\\ \hline
\hline

\#45 &  24 &  64 &  12 &  \#38
\\ \hline
\#29 &  6 &  37 &  12 &  \#19 \#37 \#38
\\ \hline
\#39 &  12 &  36 &  12 &  \#31
\\ \hline
\#0 &  24 &  04 &  12 &  \#1
\\ \hline
\#10 &  4 &  18 &  12 &  \#5 \#17 \#19
\\ \hline
\#2 &  6 &  07 &  12 &  \#1 \#4 \#5
\\ \hline
\#44 &  2 &  43 &  12 &  \#36 \#37
\\ \hline
\#24 &  1 &  25 &  12 &  \#14 \#17 \#34 \#35 \#36
\\ \hline
\#7 &  2 &  15 &  12 &  \#4 \#13 \#14
\\ \hline
\#21 &  2 &  25 &  12 &  \#13 \#31 \#32 \#33
\\ \hline
\#42 &  2 &  33 &  12 &  \#33 \#34
\\ \hline
\#43 &  2 &  32 &  12 &  \#35
\\ \hline
\#41 &  4 &  22 &  12 &  \#32
\end{tabular}
\caption{The graph of D- and C-mutations on the space of projective arrangements of three double pseudolines.  \label{demimutation}}
\end{table}
}

\begin{figure}[!htb]
\centering
\footnotesize
\tiny

\def\labA{$04$}
\def\labB{$07$}
\def\labC{$18$}
\def\labG{$37$}
\def\labH{$15$}
\def\labJ{$43$}
\def\labK{$25$}
\def\labL{$33$}
\def\labNstar{$25^*$}
\def\labO{$32$}
\def\labP{$22$}
\def\labR{$36$}
\def\labZ{$64$}

\def\labA{$\#0$}
\def\labB{$\#2$}
\def\labC{$\#10$}
\def\labG{$\#29$}
\def\labH{$\#7$}
\def\labJ{$\#44$}
\def\labK{$\#21$}
\def\labL{$\#42$}
\def\labNstar{$\#24$}
\def\labO{$\#43$}
\def\labP{$\#41$}
\def\labR{$\#39$}
\def\labZ{$\#45$}

\def\labZE{0}
\def\labON{1}
\def\labTW{2}
\def\labTH{3}
\def\labFO{4}
\def\labFI{5}
\def\labSI{6}
\def\labSE{7}
\def\labHE{8}
\def\labNI{9}

\def\labDZE{d0}
\def\labDON{d1}
\def\labDTW{d2}
\def\labDTH{d3}
\def\labDFO{d4}
\def\labDFI{d5}
\def\labDSI{d6}
\def\labDSE{d7}
\def\labDHE{d8}
\def\labDNI{d9}

\def\labVZE{v0}
\def\labVON{v1}
\def\labVTW{v2}
\def\labVTH{v3}
\def\labVFO{v4}
\def\labVFI{v5}
\def\labVSI{v6}
\def\labVSE{v7}
\def\labVHE{v8}
\def\labVNI{v9}

\def\labTZE{t0}
\def\labTON{t1}
\def\labTTW{t2}
\def\labTTR{t3}

\def\labON{$\#12$}
\def\labTW{$\#20$}
\def\labTH{$\#15$}
\def\labFO{$\#16$}
\def\labFI{$\#18$}
\def\labSI{$\#6$}
\def\labSE{$\#27$}
\def\labHE{$\#30$}
\def\labNI{$\#8$}

\def\labDZE{$\#26$}
\def\labDON{$\#28$}
\def\labDTW{$\#22$}
\def\labDTH{$\#23$}
\def\labDFO{$\#25$}
\def\labDFI{$\#9$}
\def\labDSI{$\#11$}
\def\labDSE{$\#3$}
\def\labDHE{$\#40$}

\def\labDNI{$\#38$}
\def\labVZE{$\#19$}
\def\labVON{$\#36$}
\def\labVTW{$\#37$}
\def\labVTH{$\#17$}
\def\labVFO{$\#33$}
\def\labVFI{$\#31$}
\def\labVSI{$\#35$}
\def\labVSE{$\#34$}
\def\labVHE{$\#13$}
\def\labVNI{$\#14$}
\def\labTZE{$\#4$}
\def\labTON{$\#5$}
\def\labTTW{$\#1$}
\def\labTTR{$\#32$}

\psfrag{0}{\labZE}
\psfrag{1}{\labON}
\psfrag{2}{\labTW}
\psfrag{3}{\labTH}
\psfrag{4}{\labFO}
\psfrag{5}{\labFI}
\psfrag{6}{\labSI}
\psfrag{7}{\labSE}
\psfrag{8}{\labHE}
\psfrag{9}{\labNI}
\psfrag{d0}{\labDZE}
\psfrag{d1}{\labDON}
\psfrag{d2}{\labDTW}
\psfrag{d3}{\labDTH}
\psfrag{d4}{\labDFO}
\psfrag{d5}{\labDFI}
\psfrag{d6}{\labDSI}
\psfrag{d7}{\labDSE}
\psfrag{d8}{\labDHE}
\psfrag{d9}{\labDNI}
\psfrag{v0}{\labVZE}
\psfrag{v1}{\labVON}
\psfrag{v2}{\labVTW}
\psfrag{v3}{\labVTH}
\psfrag{v4}{\labVFO}
\psfrag{v5}{\labVFI}
\psfrag{v6}{\labVSI}
\psfrag{v7}{\labVSE}
\psfrag{v8}{\labVHE}
\psfrag{v9}{\labVNI}
\psfrag{t0}{\labTZE}
\psfrag{t1}{\labTON}
\psfrag{t2}{\labTTW}
\psfrag{t3}{\labTTR}

\psfrag{t4}{\labA}    
\psfrag{t5}{\labB}
\psfrag{t6}{\labC}
\psfrag{t7}{\labG} 
\psfrag{t8}{\labH}
\psfrag{t9}{\labJ}
\psfrag{q0}{\labK}
\psfrag{q1}{\labL}
\psfrag{q2}{\labNstar}
\psfrag{q3}{\labO}
\psfrag{q4}{\labP}
\psfrag{q5}{\labR}
\psfrag{q6}{\labZ}
\includegraphics[width=.99\linewidth]{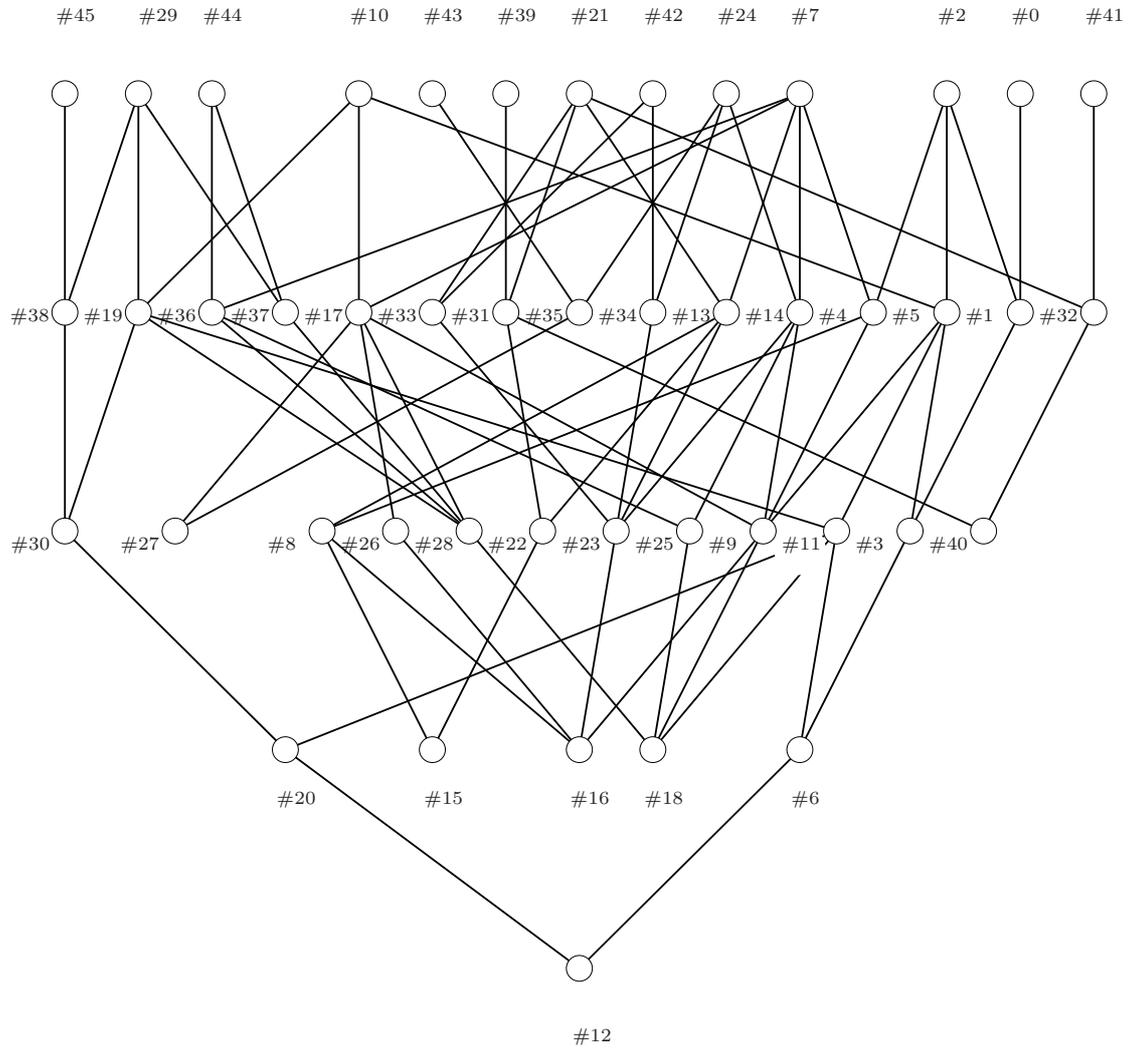}
\caption{The graph of D- and C-mutations on the space of projective arrangements of three double pseudolines.}
\label{fig:halfmutation}
\end{figure}

\end{document}